\newtheorem{theorem}{Theorem}[section]
\newtheorem{corollary}{Corollary}[section]
\newcommand{\Eqref}[1]{Equation~\eqref{#1}}
\newcommand{\Eqsref}[1]{Equations~\eqref{#1}}
\newcommand{\secref}[1]{Section~\ref{#1}}
\newcommand{\Secref}[1]{Section~\ref{#1}}
\newcommand{\subsecref}[1]{Subsection~\ref{#1}}
\newcommand{\thmref}[1]{Theorem~\ref{#1}}
\newcommand{\braket}[1]{\langle{#1}\rangle}
\newcommand{\lshad}{[\![}
\newcommand{\rshad}{]\!]}
\newcommand{\sdot}{\,\cdot\,}
\newcommand{\dd}{\mathrm{d}}
\newcommand{\vd}[1]{\dd{#1}\,}
\newcommand{\ud}{\,\dd}
\newcommand{\nbr}[1]{$#1$\nobreakdash-\hspace{0pt}}
\providecommand{\abs}[1]{\lvert#1\rvert}
\providecommand{\Abs}[1]{\left\lvert#1\right\rvert}
\DeclareMathOperator{\tr}{tr}
\DeclareMathOperator{\sgn}{sgn}
\journal{arXiv}
\begin{document}

\begin{frontmatter}

\title{Canonical transformations in quantum mechanics}
\author{Maciej B{\l}aszak}
\ead{blaszakm@amu.edu.pl}
\author{Ziemowit Doma{\'n}ski}
\ead{ziemowit@amu.edu.pl}
\address{Faculty of Physics, Adam Mickiewicz University\\
         Umultowska 85, 61-614 Pozna{\'n}, Poland}

\begin{abstract}
This paper presents the general theory of canonical transformations of
coordinates in quantum mechanics. First, the theory is developed in the
formalism of phase space quantum mechanics. It is shown that by transforming a
star-product, when passing to a new coordinate system, observables and states
transform as in classical mechanics, i.e., by composing them with a
transformation of coordinates. Then the developed formalism of coordinate
transformations is transferred to a standard formulation of quantum mechanics.
In addition, the developed theory is illustrated on examples of particular
classes of quantum canonical transformations.
\end{abstract}

\begin{keyword}
quantum mechanics \sep deformation quantization \sep canonical transformations
\sep Moyal product \sep phase space
\end{keyword}

\end{frontmatter}

\section{Introduction}
\label{sec:1}
In classical Hamiltonian mechanics transformations of phase space coordinates
(especially canonical transformations of coordinates) are an important part of
the theory. Since quantum mechanics arises from Hamiltonian mechanics it is
natural to ask if it is possible to introduce the concept of canonical
transformations of coordinates in quantum mechanics. The importance of this
problem was evident to scientists from the early days of quantum theory. The
development of the theory of canonical transformations of coordinates in quantum
mechanics is mainly contributed to Jordan, London and Dirac back in 1925
\cite{Born:1926,Jordan:1926a,Jordan:1926b,London:1926,London:1927,Dirac:1927,%
Dirac:1958} and it is still an area of intense research.

In the usual approach to canonical transformations in quantum mechanics one
identifies canonical transformations with unitary operators defined on a Hilbert
space. Such approach was used by Mario Moshinsky and his collaborators in a
series of papers \cite{Mello:1975,Moshinsky:1978,Moshinsky:1979,%
Garcia-Calderon:1980,Dirl:1988}. Also other researchers used such approach
\cite{Lee.Yi:1995,Kim:1999}. Worth noting are also papers of
\citet{Anderson:1993,Anderson:1994} where an extension of canonical
transformations to non-unitary operators is presented. Nevertheless, after so
many years of efforts, there is still lack of a general theory of coordinate
transformations in quantum mechanics, including a satisfactory complete theory
of canonical transformations. With the following paper we are trying to fill
this gap, at least to some extend.

A unitary operator $\hat{U}$ associated with some canonical change
of coordinates transforms vector states from one coordinate system
to the other: $\phi'=\hat{U}\phi$. Also observables, being
operators on a Hilbert space of states, transform according to a
prescription
\begin{equation}
\hat{A}' = \hat{U} \hat{A} \hat{U}^{-1},
\label{eq:1.1}
\end{equation}
where $\hat{A}$ is an initial operator and $\hat{A}'$ an operator
after the change of coordinates. In particular, operators
$\hat{q}$, $\hat{p}$ of position and momentum transform according
to \eqref{eq:1.1}. Observe that the commutator of operators of
position and momentum as well as their Hermitian property do not
change after a unitary transformation. For this reason unitary
operators are identified with canonical transformations.

To some unitary operators one can try to associate transformations
defined on a classical phase space. It can be done using a Born's
quantization rule. This rule formulated by Max Born states that to
every classical observable $A(x,p)$ (real-valued function defined
on a phase space) written in a Cartesian coordinate system one can
associate an operator on a Hilbert space by substituting for $x$
and $p$ operators of position and momentum $\hat{q}$, $\hat{p}$
and symmetrically ordering them. If one now considers a
transformation of phase space coordinates (possibly depending on
an evolution parameter $t$)
\begin{equation*}
T = (Q, P) \colon \mathbb{R}^2 \supset U \to W \subset \mathbb{R}^2,
\end{equation*}
then one can define operators
\begin{align*}
\hat{Q} & = Q(\hat{q},\hat{p},t), \\
\hat{P} & = P(\hat{q},\hat{p},t).
\end{align*}
If $[\hat{Q},\hat{P}] = i\hbar$ then the operators $\hat{Q}$, $\hat{P}$ can be
thought of as operators of position and momentum associated to a new coordinate
system, and $T$ is then called a quantum canonical transformation. In some cases
the operator $\hat{Q}$ can be written in a position representation, i.e., as an
operator of multiplication by a coordinate variable. In fact, there exist a
measure $\mu$ and a unitary operator $\hat{U}_T \colon \mathcal{H} \to
L^2(\sigma(\hat{Q}),\mu)$ such that
\begin{equation*}
\hat{U}_T \hat{Q} \hat{U}_T^{-1} = x',
\end{equation*}
where $\sigma(\hat{Q})$ is the spectrum of the operator $\hat{Q}$. The
operator $\hat{U}_T$ is precisely the unitary operator corresponding to the
quantum canonical transformation $T$.

At this point we arrive to a problem concerning the Born's quantization
procedure. Namely, we could try to quantize a classical Hamiltonian system in
some arbitrary canonical coordinate system using the Born's quantization rule.
However, this way we would end up in non-equivalent quantum systems. As an
example let us consider a classical harmonic oscillator which time evolution is
governed by the Hamiltonian
\begin{equation*}
H(x,p) = \frac{1}{2} \left( p^2 + \omega^2 x^2 \right).
\end{equation*}
In accordance with the Born's quantization rule to $H$ corresponds the following
operator
\begin{equation*}
\hat{H} = H(\hat{q},\hat{p})
= \frac{1}{2} \left( \hat{p}^2 + \omega^2 \hat{q}^2 \right).
\end{equation*}
Performing a classical canonical transformation of coordinates
$T \colon (\mathbb{R} \setminus \{0\}) \times \mathbb{R} \to
(\mathbb{R} \setminus \{0\}) \times \mathbb{R}$, $T(x',p') = (x,p)$ where
\begin{equation*}
x = \begin{cases}
        \phantom{-} \sqrt{\abs{2x'}}, & x' > 0 \\
        -\sqrt{\abs{2x'}}, & x' < 0
    \end{cases}, \quad
p = p'\sqrt{\abs{2x'}},
\end{equation*}
with the inverse $T^{-1}$ in the form
\begin{equation*}
x' = \begin{cases}
         \phantom{-} \frac{1}{2}x^2, & x > 0 \\
         -\frac{1}{2}x^2, & x < 0
     \end{cases}, \quad
p' = \abs{x}^{-1} p,
\end{equation*}
the Hamiltonian $H$ transforms to the following function
\begin{equation}
H'(x',p') = H(T(x',p')) = \abs{x'}p'^2 + \omega^2 \abs{x'}.
\label{eq:1.2}
\end{equation}
Applying the Born's quantization rule to $H'$ with symmetric ordering of
canonical operators of position and momentum, $\hat{q}' = x'$,
$\hat{p}' = -i\hbar\partial_{x'}$, results in the following operator
\begin{equation*}
\hat{H}' = H'(\hat{q}',\hat{p}') = \frac{1}{2} \abs{\hat{q}'} \hat{p}'^2
    + \frac{1}{2} \hat{p}'^2 \abs{\hat{q}'} + \omega^2 \abs{\hat{q}'}.
\end{equation*}
Hamiltonians $\hat{H}$ and $\hat{H}'$ are not unitarily equivalent
--- they describe different quantum systems. As we will show in
this paper, the apparent inconsistency of the quantization can be
solved by changing the ordering of operators $\hat{q}'$,
$\hat{p}'$ in the transformed Hamiltonian $H'$ in an appropriate
way, which will be equivalent to the existence of a respective
unitary operator related to the transformation.

In order to control the whole procedure we take a different route
for developing a theory of transformations of coordinates in
quantum mechanics, to that usually found in the literature.
Namely, first we consider a quantum theory on a phase space, and
then pass to the standard description of quantum mechanics. Such
approach seems to be more natural as the concept of a change of
coordinates can be introduced in a similar manner as in classical
Hamiltonian theory.

In the usual approach to coordinate transformations in phase space quantum
mechanics \cite{Hietarinta:1982,Dereli:2009,Curtright:1998} one modifies an action of a
transformation of coordinates on observables and states, from the simple
composition of maps. This is done to receive a consistent theory. We present
a slightly different, although equivalent, approach in which we transform not
only observables and states but also a star-product. Such an idea the reader can
also find in \cite{Dias:2004}. Then the action of a coordinate transformation on
phase space functions can remain to be a simple composition of maps. Moreover,
we present a systematic way of passing the developed theory of transformations
to the standard description of quantum mechanics. Such approach of dealing with
change of coordinates in quantum theory proves to be very effective and gives
means to develop a general theory of transformations of coordinates in quantum
mechanics which is still missing after almost 90 years of its development.

The paper is organized as follows. In \secref{sec:2} we review basic concepts of
phase space quantum mechanics focusing on properties important in developing a
transformation theory. \Secref{sec:3} presents the general theory of canonical
transformations of coordinates in phase space quantum mechanics. \Secref{sec:4}
contains the passage of the quantum transformation theory developed on the phase
space to the ordinary quantum mechanics, using the results from previous
section. \Secref{sec:5} presents couple classes of transformations of
coordinates illustrating the developed formalism. In \secref{sec:6} we give
final remarks and conclusions.

\section{Phase space quantum mechanics}
\label{sec:2}

\subsection{Preliminaries}
\label{subsec:2.1}
There exist many approaches to quantum mechanics. The one which seems to be the
most suited to develop a theory of transformations of phase space coordinates is
a phase space quantum mechanics also referred to as a deformation quantization
(see \cite{Bayen:1978a,Bayen:1978b}, and \cite{Dito.Sternheimer:2002,Gutt:2000,%
Weinstein:1994b} for recent reviews). The review of phase space quantum
mechanics which we will present is based on our earlier work
\cite{Blaszak:2012}.

The idea behind the phase space quantum mechanics relies on an appropriate
deformation of a classical Hamiltonian system. Such approach to quantum
theory uses a mathematical language very similar to that of a classical
theory. This have many benefits. On the one hand it allows a straightforward
introduction of concepts from classical mechanics into the quantum counterpart
(for example, the concept of a transformation of phase space coordinates), on
the other hand it gives in a natural way a classical Hamiltonian mechanics as a
limiting case, hence allowing investigation of quantum systems in a classical
limit.

The procedure of a deformation should be performed with respect to some
parameter $\hbar$ (which is taken to be a Planck's constant) so that in the
limit $\hbar \to 0$ the quantum theory should reduce to the classical theory.

The deformation of a classical Hamiltonian system can be fully
given by deforming an algebraic structure of a classical Poisson
algebra. This will then yield a deformation of a phase space (a
Poisson manifold) to a noncommutative phase space (a
noncommutative Poisson manifold), a deformation of classical
states to quantum states and a deformation of classical
observables to quantum observables.

First, let us deal with a deformation of a phase space. A Poisson
manifold $(M,\mathcal{P})$ ($\mathcal{P}$ being a Poisson tensor)
is fully described by a Poisson algebra $\mathcal{A}_C =
(C^\infty(M),\cdot,\{\sdot,\sdot\})$ of smooth complex-valued
functions on the phase space $M$, where $\cdot$ is a point-wise
product of functions and $\{\sdot,\sdot\}$ is a Poisson bracket
induced by a Poisson tensor $\mathcal{P}$. Hence by deforming
$\mathcal{A}_C$ to some noncommutative algebra $\mathcal{A}_Q =
(C^\infty(M),\star,\lshad\sdot,\sdot\rshad)$, where $\star$ is
some noncommutative associative product of functions being a
deformation of a point-wise product, and $\lshad\sdot,\sdot\rshad$
is a Lie bracket satisfying the Leibniz's rule and being a
deformation of the Poisson bracket $\{\sdot,\sdot\}$, we can think
of a quantum Poisson algebra $\mathcal{A}_Q$ as describing a
noncommutative Poisson manifold.

The algebra $\mathcal{A}_C$ contains in particular a subset of
classical observables, whereas $\mathcal{A}_Q$ contains a subset
of quantum observables. Note that quantum observables are
functions on the phase space $M$ similarly as in classical
mechanics. Furthermore, classical observables are real-valued
functions from $\mathcal{A}_C$, i.e., self-adjoint functions with
respect to the complex-conjugation --- an involution in the
algebra $\mathcal{A}_C$. Quantum observables should also be
self-adjoint functions with respect to an involution in the
algebra $\mathcal{A}_Q$. However, in general the
complex-conjugation do not need to be an involution in
$\mathcal{A}_Q$. Thus in $\mathcal{A}_Q$ we have to introduce some
involution which would be a deformation of the complex-conjugation
\cite{Blaszak:2012}. As a consequence, quantum observables
(self-adjoint functions with respect to the quantum involution)
might be complex and $\hbar$-dependent.

There is a vast number of equivalent quantization schemes (see
\cite{Blaszak:2012} for review of the subject) which yield a
quantization equivalent to a standard approach to quantum
mechanics but giving different orderings of position and momentum
operators. From this diversity of quantization schemes the
simplest one is a Moyal quantization. It follows from the fact
that for the Moyal quantum algebra the involution is the
complex-conjugation as in the classical case. Thus in this case
quantum observables, exactly like classical observables, can be
chosen as real-valued functions. Further on we will deal only with
that distinguished quantization. Such a choice is not a
restriction as other quantization schemes known in the literature
are gauge equivalent to the Moyal one (see \cite{Blaszak:2012} and
\subsecref{subsec:2.5})).

Throughout the paper we will use the following notation. The \nbr{i}th
components of vectors $x,p \in \mathbb{R}^N$ representing positions and momenta
will be denoted by $x^i$ and $p_i$, i.e., $x = (x^1,\dotsc,x^N)$ and
$p = (p_1,\dotsc,p_N)$. By $xp$ we will understand the scalar product of vectors
$x$ and $p$, i.e., $xp = x^i p_i$. Often sequences of functions
$(Q^1,\dotsc,Q^N)$, $(P_1,\dotsc,P_N)$, derivatives $(\partial_{x^1},\dotsc,
\partial_{x^N})$, $(\partial_{p_1},\dotsc,\partial_{p_N})$ and operators
$(\hat{q}^1,\dotsc,\hat{q}^N)$, $(\hat{p}_1,\dotsc,\hat{p}_N)$ will be simply
denoted by $Q$, $P$, $\partial_x$, $\partial_p$, $\hat{q}$, $\hat{p}$. Also we
define products of these sequences with vectors and matrices in a standard way.

The Moyal quantization scheme is as follows. First, let us assume that
$M = \mathbb{R}^{2N}$ and $\mathcal{P} = \partial_{x^i} \wedge \partial_{p_i}$.
Define a \nbr{\star}product by a formula
\begin{equation*}
f \star g = f \exp \left( \frac{1}{2} i\hbar \overleftarrow{\partial}_{x^i}
    \overrightarrow{\partial}_{p_i} - \frac{1}{2} i\hbar
    \overleftarrow{\partial}_{p_i} \overrightarrow{\partial}_{x^i} \right) g.
\end{equation*}
This \nbr{\star}product is called the Moyal product. For a two-dimensional case
($N = 1$) the Moyal product reads
\begin{equation*}
f \star g = \sum_{k=0}^\infty \frac{1}{k!}
\left(\frac{i\hbar}{2}\right)^{k}
    \sum_{m=0}^k \binom{k}{m} (-1)^m
    (\partial_x^{k-m} \partial_p^m f)(\partial_x^m \partial_p^{k-m} g).
\end{equation*}
The deformed Poisson bracket $\lshad\sdot,\sdot\rshad$ associated
with the \nbr{\star}product will be given in terms of a
\nbr{\star}commutator $[\sdot,\sdot]$ as follows
\begin{equation*}
\lshad f,g \rshad = \frac{1}{i\hbar} [f,g]
= \frac{1}{i\hbar}(f \star g - g \star f), \quad
f,g \in \mathcal{A}_Q.
\end{equation*}

To avoid problems with convergence of the series in the above definition of the
\nbr{\star}product the common practise is to extend the space $C^\infty(M)$ to
a space $C^\infty(M)\lshad\hbar\rshad$ of formal power series in $\hbar$ with
coefficients from $C^\infty(M)$. The \nbr{\star}product is then properly defined
on such space. Other approach is to define a \nbr{\star}product on the space
$\mathcal{S}(\mathbb{R}^{2N})$ of Schwartz functions using an integral formula
and then extend it to an appropriate space of distributions \cite{Hansen:1984,
Kammerer:1986,Maillard:1986,Gracia-Bondia:1988}. It is then possible to define
a topology on such space of distributions for which the \nbr{\star}product, for
a certain class of functions, can be expressed as the above series.

Note, that the Moyal quantization was performed in a specific canonical
coordinate system, namely, in a Cartesian coordinate system on
$\mathbb{R}^{2N}$. In \secref{sec:3} we will see how to quantize systems in
different coordinate systems.

Let us prove a property of functions from $\mathcal{A}_Q$ which will be used
later on. For simplicity we only present the theorem for the case $N = 1$.
\begin{theorem}
\label{thm:2.1}
Every function $f \in \mathcal{A}_Q$ can be expanded into a \nbr{\star}power
series
\begin{equation}
f = \sum_{n,m = 0}^\infty a_{nm}
    \underbrace{x \star \dotsb \star x}_n \star
    \underbrace{p \star \dotsb \star p}_m,
\label{eq:2.1}
\end{equation}
where $a_{nm} \in \mathbb{C}$.
\end{theorem}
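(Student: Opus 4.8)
The goal is to show that any $f \in \mathcal{A}_Q$ (i.e., any $f \in C^\infty(\mathbb{R}^2)\lshad\hbar\rshad$) can be written as a $\star$-power series $\sum_{n,m} a_{nm}\, x^{\star n} \star p^{\star m}$ with constant coefficients. The plan is to proceed by a recursive ``Taylor-like'' extraction of coefficients, using the fact that the $\star$-product is a deformation of the point-wise product, so that leading-order behavior is governed by ordinary multiplication.

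First I would establish the key computational lemma: for the Moyal product with $N=1$, one has $x^{\star n} = x^n$ and $p^{\star m} = p^m$ (since $x \star x = x^2$, as the bidifferential correction vanishes when both arguments are $x$, and similarly for $p$), and more importantly the mixed monomial $x^{\star n} \star p^{\star m} = x^n p^m + (\text{terms of order } \hbar \text{ or higher, polynomial in } x,p \text{ of lower total degree})$. Concretely, $x^n \star p^m = \sum_{k} \frac{1}{k!}\bigl(\tfrac{i\hbar}{2}\bigr)^k \binom{n}{k}\binom{m}{k} k!\, x^{n-k} p^{m-k}$, a \emph{finite} sum. Thus the ``symbol map'' sending $x^{\star n}\star p^{\star m} \mapsto x^n p^m$ is, modulo lower-order/higher-$\hbar$ corrections, a bijection onto monomials, and this is the structural fact that makes the expansion possible and the coefficients unique.

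Next I would set up the recursion on powers of $\hbar$. Write $f = \sum_{j\ge 0} \hbar^j f_j$ with $f_j \in C^\infty(\mathbb{R}^2)$. At order $\hbar^0$, I would Taylor-expand $f_0$ about the origin: $f_0 = \sum_{n,m} a^{(0)}_{nm} x^n p^m$ with $a^{(0)}_{nm} = \frac{1}{n!m!}\partial_x^n \partial_p^m f_0(0,0)$, and set the leading part of the $\star$-series to $\sum_{n,m} a^{(0)}_{nm}\, x^{\star n}\star p^{\star m}$. By the computational lemma, $f - \sum a^{(0)}_{nm} x^{\star n}\star p^{\star m}$ has no $\hbar^0$ term, i.e., it lies in $\hbar\, C^\infty(\mathbb{R}^2)\lshad\hbar\rshad$. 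Dividing by $\hbar$ and repeating the Taylor expansion at order $\hbar^0$ of the new series gives the coefficients $a^{(1)}_{nm}$, and so on inductively; the full coefficient $a_{nm}$ is then recovered as $a_{nm} = \sum_j \hbar^j a^{(j)}_{nm}$, a well-defined element of $\mathbb{C}\lshad\hbar\rshad$ --- here one should note that $\mathbb{C}$ in the statement should be read as the coefficient ring $\mathbb{C}\lshad\hbar\rshad$, or equivalently the $a_{nm}$ absorb $\hbar$-dependence, consistently with the earlier remark that quantum observables may be $\hbar$-dependent. At each fixed order in $\hbar$ the rearrangement of the finite correction sums is legitimate, so the procedure is well-defined order by order.

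The main obstacle is not the order-by-order algebra but the \emph{bookkeeping of convergence}: at a fixed power $\hbar^j$, the correction terms coming from $x^{\star n}\star p^{\star m}$ for \emph{all} larger $n,m$ feed back into that order, so one must check that only finitely many of them contribute to any fixed monomial $x^a p^b$ at order $\hbar^j$. This holds because $x^n\star p^m$ produces at order $\hbar^k$ only the monomial $x^{n-k}p^{m-k}$ (a single term), so the pair $(n,m)$ contributing $x^a p^b$ at order $\hbar^k$ must satisfy $n = a+k$, $m = b+k$ --- a unique pair for each $k$, and $k \le j$. Hence the recursion closes and the Taylor coefficients at each stage are determined by a finite computation. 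I would therefore structure the write-up as: (i) the finite-sum formula for $x^n \star p^m$; (ii) the observation that this makes the symbol map triangular with respect to the grading by $\hbar$-order and total polynomial degree; (iii) the inductive construction of the $a^{(j)}_{nm}$; (iv) a remark on convergence in the $\hbar$-adic (formal power series) topology, which is automatic since everything is defined in $C^\infty(M)\lshad\hbar\rshad$.
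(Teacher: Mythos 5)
Your proposal is correct, but it runs the argument in the opposite direction from the paper. The paper's proof starts from the recurrence relations $x\cdot(x^{\star n}\star p^{\star m}) = x^{\star(n+1)}\star p^{\star m} - \tfrac{1}{2}i\hbar m\, x^{\star n}\star p^{\star(m-1)}$ and its $p$-analogue, and uses them to show by induction on total degree that each ordinary monomial $x^np^m$ is a \emph{finite} $\star$-polynomial; substituting into the Taylor expansion of $f$ then gives \eqref{eq:2.1} with each $a_{nm}$ a finite sum, and no $\hbar$-adic limiting procedure is needed. You instead expand the $\star$-monomials in ordinary monomials via $x^{\star n}\star p^{\star m} = x^n\star p^m = \sum_k \tfrac{1}{k!}(\tfrac{i\hbar}{2})^k \tfrac{n!}{(n-k)!}\tfrac{m!}{(m-k)!}\,x^{n-k}p^{m-k}$ and invert this triangular relation order by order in $\hbar$. (Note your stated coefficient $\tfrac{1}{k!}\binom{n}{k}\binom{m}{k}k!$ is off by a factor of $k!$ --- the correct value is $k!\binom{n}{k}\binom{m}{k}$ --- but your argument only uses the qualitative fact that order $\hbar^k$ contributes the single monomial $x^{n-k}p^{m-k}$, so nothing breaks.) What your route buys is an explicit verification that each coefficient $a_{nm}$ is determined by a finite computation and that the double series is consistent order by order, a point the paper passes over in silence; what the paper's route buys is brevity and the avoidance of any inversion, since the change of basis is constructed directly in the needed direction. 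Both proofs share the same unaddressed caveat: a general $f\in C^\infty$ need not equal its Taylor series, so the expansion \eqref{eq:2.1} is really a formal (asymptotic) statement, and your reading of $a_{nm}\in\mathbb{C}$ as allowing $\hbar$-dependence is consistent with what the paper's own recurrences produce (e.g., $xp = x\star p - \tfrac{1}{2}i\hbar$).
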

\begin{proof}
From recurrence relations
\begin{align*}
x \cdot (\underbrace{x \star \dotsb \star x}_n \star
    \underbrace{p \star \dotsb \star p}_m) & =
    \underbrace{x \star \dotsb \star x}_{n + 1} \star
    \underbrace{p \star \dotsb \star p}_m
- \frac{1}{2} i\hbar m \underbrace{x \star \dotsb \star x}_n
    \star \underbrace{p \star \dotsb \star p}_{m - 1}, \\
p \cdot (\underbrace{x \star \dotsb \star x}_n \star
    \underbrace{p \star \dotsb \star p}_m) & =
    \underbrace{x \star \dotsb \star x}_n \star
    \underbrace{p \star \dotsb \star p}_{m + 1}
- \frac{1}{2} i\hbar n \underbrace{x \star \dotsb \star x}_{n - 1}
    \star \underbrace{p \star \dotsb \star p}_m
\end{align*}
it follows that monomials $x^n p^m$ can be written as \nbr{\star}polynomials.
Thus after expanding $f$ into the Taylor series it is easily seen that $f$ can
be written in the form \eqref{eq:2.1}.
\end{proof}

\subsection{Space of states}
\label{subsec:2.2}
In general a space of states is fully characterized by the algebraic structure
of the quantum Poisson algebra $\mathcal{A}_Q$ \cite{Bordemann:1998,%
Waldmann:2005}. It can be shown that for the Moyal quantization states can
be represented as quantum distribution functions, i.e., square integrable
functions $\Psi$ defined on the phase space satisfying certain conditions
\cite{Dias:2004b,Blaszak:2012}. For this reason the Hilbert space $\mathcal{H} =
L^2(\mathbb{R}^{2N})$ of square integrable functions on the phase space will be
called a space of states. Observe, that the Moyal product can be extended to a
product between smooth functions from $C^\infty(\mathbb{R}^{2N})$ and square
integrable functions from $L^2(\mathbb{R}^{2N})$. It is also possible to define
the Moyal product between square integrable functions from
$L^2(\mathbb{R}^{2N})$ by extending it from the space
$\mathcal{S}(\mathbb{R}^{2N})$ of Schwartz functions \cite{Blaszak:2012}.

\subsection{Observables and states as operators}
\label{subsec:2.3}
In Moyal quantization scheme observables $A \in \mathcal{A}_Q$ and states
$\Psi \in \mathcal{H}$ can be treated as operators $\hat{A}$ and $\hat{\Psi}$
defined on the Hilbert space of states $\mathcal{H}$ by the following
prescription
\begin{equation*}
\hat{A} = A \star {}, \quad \hat{\Psi} = (2\pi\hbar)^{N/2} \Psi \star {}.
\end{equation*}
It can be proved that the operator $\hat{A}$ can be written as appropriately
ordered function $A$ of operators of position $(\hat{q}_M)^i = x^i \star {}
= x^i + \frac{1}{2} i\hbar \partial_{p_j}$ and momentum
$(\hat{p}_M)_j = p_j \star {} = p_j - \frac{1}{2} i\hbar \partial_{x^i}$
($[(\hat{q}_M)^i,(\hat{p}_M)_j] = i\hbar\delta^i_j$)
\begin{equation*}
\hat{A} = A \star {} = A_M(\hat{q}_M,\hat{p}_M).
\end{equation*}
$A_M(\hat{q},\hat{p})$ denotes symmetrically-ordered function of
operators $\hat{q}$ and $\hat{p}$.

Note that the Hermitian conjugation $\dagger$ of the operators $\hat{A}$ and
$\hat{\Psi}$ in Moyal quantization scheme is related to the complex-conjugation
$*$ (involution in $\mathcal{A}_Q$) by the formulae
\begin{equation*}
\hat{A}^\dagger = (A \star {})^\dagger = A^* \star {}, \quad
\hat{\Psi}^\dagger = ((2\pi\hbar)^{N/2} \Psi \star {})^\dagger
= (2\pi\hbar)^{N/2} \Psi^* \star {}.
\end{equation*}

\subsection{Expectation values of observables and time evolution equation}
\label{subsec:2.4}
Formulas for the expectation values of observables and the time evolution of
states are similar as in classical mechanics, except that the point-wise product
$\cdot$ of functions and the Poisson bracket $\{\sdot,\sdot\}$ have to be
replaced with the \nbr{\star}product and the quantum Poisson bracket
$\lshad\sdot,\sdot\rshad$. Thus, the expectation value of an observable
$A \in \mathcal{A}_Q$ in a state $\Psi \in \mathcal{H}$ is given by the formula
\begin{align*}
\braket{A}_{\Psi} & = \frac{1}{(2\pi\hbar)^{N/2}} \iint (A(0) \star
    \Psi(t))(x,p) \ud{x} \ud{p}
= \braket{A(0)}_{\Psi(t)} \\
& = \frac{1}{(2\pi\hbar)^{N/2}} \iint (A(t) \star \Psi(0))(x,p) \ud{x} \ud{p}
= \braket{A(t)}_{\Psi(0)}.
\end{align*}
The time evolution equation of quantum distribution functions $\Psi(t)$
(Schr{\"o}dinger picture) is the counterpart of the Liouville's equation
describing the time evolution of classical distribution functions, and is given
by the formula
\begin{equation*}
\frac{\dd \Psi}{\dd t} - \lshad H,\Psi(t) \rshad = 0
\quad \Leftrightarrow \quad
i\hbar \frac{\dd \Psi}{\dd t} - [H,\Psi(t)] = 0,
\end{equation*}
where $H$ is a Hamiltonian (distinguished real function from $\mathcal{A}_Q$).
The time evolution of quantum observable $A(t)$ (Heisenberg picture) is given by
\begin{equation}
\frac{\dd A}{\dd t} - \lshad A(t),H \rshad = 0
\quad \Leftrightarrow \quad
i\hbar \frac{\dd A}{\dd t} - [A(t),H] = 0.
\label{eq:2.2}
\end{equation}

\subsection{Equivalence of quantizations}
\label{subsec:2.5}
Two star-products $\star$ and $\star'$ are said to be gauge equivalent if there
exists a vector space automorphism $S \colon C^\infty(\mathbb{R}^{2N}) \to
C^\infty(\mathbb{R}^{2N})$ of the form
\begin{equation}
S = \sum_{k=0}^\infty \hbar^k S_k, \quad S_0 = 1,
\label{eq:2.3}
\end{equation}
where $S_k$ are linear operators, which satisfies the formula
\begin{equation*}
S(f \star g) = Sf \star' Sg \quad f,g \in C^\infty(\mathbb{R}^{2N}).
\end{equation*}
If, moreover, the automorphism $S$ preserves the deformed Poisson
brackets and involutions $*$ and $*'$ from the algebras
$\mathcal{A}_Q = (C^\infty(\mathbb{R}^{2N}), \star,
\lshad\sdot,\sdot\rshad, *)$ and $\mathcal{A}'_Q =
(C^\infty(\mathbb{R}^{2N}), \star', \lshad\sdot,\sdot\rshad',
*')$, i.e.,
\begin{align*}
S(\lshad f,g \rshad) & = \lshad Sf,Sg \rshad', \\
S(f^*) & = (Sf)^{*'},
\end{align*}
then $S$ is an isomorphism of the algebra $\mathcal{A}_Q$ onto the
algebra $\mathcal{A}'_Q$.

Two quantizations of a classical Hamiltonian system are equivalent
if there exists an isomorphism $S$ of their quantum Poisson
algebras. This equivalence is mathematical as well as physical. It
has been stressed out in \subsecref{subsec:2.1}, that to the same
measurable quantity correspond different functions from respective
quantum Poisson algebras. This observation seems to be missing in
considerations of different quantizations present in the
literature. In fact, to every observable $A \in \mathcal{A}_Q$
from one quantization scheme corresponds an observable $A' = SA
\in \mathcal{A}'_Q$ from the other quantization scheme. Both
observables $A$ and $A'$ describe the same measurable quantity and
in the limit $\hbar \to 0$ reduce to the same classical
observable. If, for example, the initial quantization is the Moyal
one, then quantum observables are real-valued functions and as a
quantum observable $A$ we can take a related classical observable
$A_C$, i.e., $A = A_C$ and is \nbr{\hbar}independent. Then, in
some other quantization scheme, gauge equivalent to the Moyal one,
to the same observable $A$ corresponds another function $A'=SA_C$,
$\hbar$-dependent and complex in general. Such approach to
equivalence of quantum systems introduces, indeed, physically
equivalent quantizations as the functions $A$, $A'$ from different
quantization schemes have the same spectra, expectation values,
etc., and when they are Hamiltonians they describe the same time
evolution.

The presented formalism of phase space quantum mechanics can be
extended to a very broad class of quantization schemes equivalent
to the Moyal-quantization scheme. Let $S$ be a vector space
automorphism of $C^\infty(\mathbb{R}^{2N})$ of the form
\eqref{eq:2.3}. Moreover, lets assume that $S$ satisfies
\begin{gather*}
Sx^i = x^i, \quad Sp_j = p_j, \\
S(f^*) = (Sf)^*.
\end{gather*}
The automorphism $S$ defines a new star-product equivalent to the Moyal-product
by the equation
\begin{equation*}
f \star_S g := S(S^{-1}f \star S^{-1}g), \quad f,g \in
C^\infty(\mathbb{R}^{2N}),
\end{equation*}
with involution being also the complex-conjugation. We will show
that such class of equivalent quantization schemes is related to
the choice of different quantum canonical coordinates for quantum
systems.

It is possible to define a morphism of spaces of states of different
quantization schemes, in terms of $S$. This morphism we will also denote by $S$.
In case when the initial quantization is the Moyal quantization $S$ will be a
Hilbert space isomorphism. In what follows we will restrict to the case when the
\nbr{S}image of the space of states $L^2(\mathbb{R}^{2N})$ is also a Hilbert
space $L^2(\mathbb{R}^{2N},\mu)$ of square integrable functions possibly with
respect to a different measure $\mu$, and when $S$ vanishes under the integral
sign, i.e.,
\begin{equation*}
\iint S\Psi(x,p) \ud{\mu(x,p)} = \iint \Psi(x,p) \ud{x}\ud{p},
\end{equation*}
for every quantum distribution function $\Psi$.

All previous results for the Moyal-quantization scheme are also valid for
the \nbr{S}quantization scheme. In particular, the symmetrically-ordered
operator function $A_M(\hat{q},\hat{p})$ of operators $\hat{q}^i$,
$\hat{p}_j$ has to be replaced by the following \nbr{S}ordering
\begin{equation*}
A_S(\hat{q},\hat{p}) := (S^{-1}A)_M(\hat{q},\hat{p}).
\end{equation*}
Also the following result holds true \cite{Blaszak:2012}
\begin{equation*}
A \star_S {} = A_S(\hat{q}_S,\hat{p}_S),
\end{equation*}
where
\begin{align*}
(\hat{q}_S)^i & := x^i \star_S {} = S (\hat{q}_M)^i S^{-1}, \\
(\hat{p}_S)_j & := p_j \star_S {} = S (\hat{p}_M)_j S^{-1}.
\end{align*}

The above results are important for transformation theory develop in the next
section.

\subsection{Ordinary description of quantum mechanics}
\label{subsec:2.6}
The phase space quantum mechanics is equivalent to the ordinary description
of quantum mechanics. To show this first note that the Hilbert space of
states can be written as the following tensor product of the Hilbert space
$L^2(\mathbb{R}^N)$ and a space dual to it $(L^2(\mathbb{R}^N))^*$:
\begin{equation*}
\mathcal{H} = \left(L^2(\mathbb{R}^N)\right)^* \otimes_S L^2(\mathbb{R}^N),
\end{equation*}
where the tensor product $\otimes_S$ is defined by
\begin{align*}
(\varphi^* \otimes_S \psi)(x,p) & = \frac{1}{(2\pi \hbar)^{N/2}} S
    \int \vd{y} e^{-\frac{i}{\hbar} py} \varphi^*\left(x - \frac{1}{2} y\right)
    \psi\left(x + \frac{1}{2} y\right) \\
& = \frac{1}{(2\pi \hbar)^{N/2}} \iiint \vd{x'} \vd{p'} \vd{y}
    \varphi^*\left(x' - \frac{1}{2} y\right) \psi\left(x' + \frac{1}{2} y\right)
    S(x,p,x',p') e^{-\frac{i}{\hbar} p'y},
\end{align*}
where $\varphi, \psi \in L^2(\mathbb{R}^N)$ and $S(x,p,x',p')$ is an integral
kernel of the isomorphism $S$.

States $\Psi \in \mathcal{H}$ treated as operators $\hat{\Psi} =
(2\pi\hbar)^{N/2} \Psi \star_S {}$ can be written in the following form
\begin{equation*}
\hat{\Psi} = \hat{1} \otimes_S \hat{\rho},
\end{equation*}
where $\hat{\rho}$ is some density operator. Hence to every pure or mixed
state $\Psi \in \mathcal{H}$ corresponds a unique density operator $\hat{\rho}$.
Similarly, observables $A \in \mathcal{A}_Q$ treated as operators $\hat{A} =
A \star_S {}$ take the form
\begin{equation*}
\hat{A} = A \star_S {}
= \hat{1} \otimes_S A_S(\hat{q},\hat{p}),
\end{equation*}
where $\hat{q}^i = x^i$ and $\hat{p}_j = -i\hbar\partial_{x^j}$ are standard
operators of position and momentum. In particular, from this it follows that
\begin{equation*}
A \star_S \Psi = \varphi^* \otimes_S A_S(\hat{q},\hat{p}) \psi,
\end{equation*}
for $\Psi = \varphi^* \otimes_S \psi$ and $\varphi,\psi \in L^2(\mathbb{R}^N)$.

The expectation values of observables $A \in \mathcal{A}_Q$ in states $\Psi
\in \mathcal{H}$ are the same as when computed in ordinary quantum mechanics
\begin{equation*}
\braket{A}_{\Psi} = \tr(\hat{\rho} A_S(\hat{q},\hat{p})),
\end{equation*}
where $\hat{\rho}$ is a density operator corresponding to $\Psi$ and
$A_S(\hat{q},\hat{p})$ is an operator corresponding to $A$. Also the
time evolution equation of states $\Psi \in \mathcal{H}$ corresponds to the
von Neumann equation describing the time evolution of density operators
$\hat{\rho}$.

\section{Canonical transformations of coordinates in phase space quantum
mechanics}
\label{sec:3}

\subsection{General theory}
\label{subsec:3.1}
As before we will consider the Moyal quantization of a classical Hamiltonian
system $(M,\mathcal{P},H)$, where $M = \mathbb{R}^{2N}$, $\mathcal{P} =
\partial_{x^i} \wedge \partial_{p_i}$, and $H \in C^\infty(M)$ is an arbitrary
real function.
A transformation of phase space coordinates of the quantum system is defined as
in classical mechanics, i.e., as a smooth bijective map $T \colon
\mathbb{R}^{2N} \supset U \ni (x',p') \to (x,p) \in W \subset \mathbb{R}^{2N}$.
For simplicity only transformations defined on almost the whole phase space will
be considered, i.e., it will be assumed that $\mathbb{R}^{2N} \setminus U$ and
$\mathbb{R}^{2N} \setminus W$ are sets of the Lebesgue-measure zero. A
transformation $T$ transforms every observable $A \in C^\infty(\mathbb{R}^{2N})$
and every state $\Psi \in L^2(\mathbb{R}^{2N})$ by the formulae
\begin{equation*}
A' = A \circ T, \quad \Psi' = \Psi \circ T.
\end{equation*}
In addition, also the Moyal-product needs to be transformed. The transformed
star-product, denoted hereafter by $\star'_T$, should fulfill the following
natural condition
\begin{equation*}
(f \star g) \circ T = (f \circ T) \star'_T (g \circ T), \quad
f,g \in \mathcal{A}_Q.
\end{equation*}
The \nbr{\star'_T}product is, in fact, given by the following formula
\begin{equation*}
f \star'_T g = f \exp \left( \frac{1}{2} i\hbar \overleftarrow{D_{x'^i}}
    \overrightarrow{D_{p'_j}} - \frac{1}{2} i\hbar \overleftarrow{D_{p'_j}}
    \overrightarrow{D_{x'^i}} \right) g,
\end{equation*}
where vector fields $D_{x'^i}$, $D_{p'_j}$ are derivations $\partial_{x^i}$,
$\partial_{p_j}$ transformed by the transformation $T$ according to the rule
\begin{align*}
(\partial_{x^i} f) \circ T & = D_{x'^i}(f \circ T), \quad
    f \in C^\infty(\mathbb{R}^{2N}),\\
(\partial_{p_j} f) \circ T & = D_{p'_j}(f \circ T), \quad
    f \in C^\infty(\mathbb{R}^{2N}).
\end{align*}
Notice, that a new product defined by vector fields $D_{x'^i}$, $D_{p'_j}$ is
well defined star-product, which associativity follows from commutativity
of $D_{x'^i}$ and $D_{p'_j}$.

To this moment we considered general transformations of
coordinates. In what follows we will focus on an important class
of transformations, namely, canonical transformations. In
classical mechanics a canonical transformation is such
transformation $T$ of phase space coordinates which transforms the
system from one canonical coordinate system to the other. In other
words, $T$ is a canonical transformation if it preserves the form
of the Poisson bracket, i.e.,
\begin{equation*}
\{x'^i, p'_j\}' = \delta^i_j,
\end{equation*}
where $\{\sdot,\sdot\}'$ denotes a Poisson bracket transformed by $T$ to the new
coordinate system:
\begin{equation*}
\{f,g\}' = \{f \circ T^{-1}, g \circ T^{-1}\} \circ T, \quad
f,g \in C^\infty(\mathbb{R}^{2N}).
\end{equation*}

Canonical transformations in phase space quantum mechanics are
defined in a similar manner. Namely, a quantum canonical
transformation is such transformation $T$ of coordinates which
preserves the form of the deformed Poisson bracket, i.e.,
\begin{equation*}
\lshad x'^i, p'_j \rshad' = \delta^i_j,
\end{equation*}
where $\lshad\sdot,\sdot\rshad'$ denotes a deformed Poisson bracket transformed
by $T$ to the new coordinate system:
\begin{equation*}
\lshad f,g \rshad' = \lshad f \circ T^{-1}, g \circ T^{-1} \rshad \circ T, \quad
f,g \in C^\infty(\mathbb{R}^{2N}).
\end{equation*}

In what follows we will consider only transformations $T$ which are quantum
canonical transformations, with a special attention to the subclass of these
which are also classically canonical, because the theory takes the simplest form
for such class of transformations.

Now we take the advantage from the following property of a transformed
star-product, valid for a wide class of quantum canonical transformations as we
will see in \secref{sec:5}. For any quantum canonical coordinate transformation
$T$ with respect to a given \nbr{\star}product, there exists a unique linear
automorphism $S_T$ of the algebra $\mathcal{A}_Q$ in the form \eqref{eq:2.3}
satisfying
\begin{subequations}
\label{eq:3.1}
\begin{gather}
S_T(f \star' g) = (S_T f) \star'_T (S_T g), \quad f,g \in \mathcal{A}_Q,
\label{eq:3.1a} \\
S_T x'^i = x'^i, \quad S_T p'_j = p'_j,
\label{eq:3.1b} \\
S_T(f^*) = (S_T f)^*, \quad f \in \mathcal{A}_Q,
\label{eq:3.1c}
\end{gather}
\end{subequations}
where $\star'$ is the initial product expressed in $x',p'$ variables
\begin{equation*}
f \star' g = f \exp \left( \frac{1}{2} i\hbar \overleftarrow{\partial}_{x'^i}
    \overrightarrow{\partial}_{p'_i} - \frac{1}{2} i\hbar
    \overleftarrow{\partial}_{p'_i} \overrightarrow{\partial}_{x'^i} \right) g.
\end{equation*}
We will consider only such transformations for which the above statement is
true, however, we believe that this statement holds for every quantum canonical
transformation. The above property will play a crucial role while passing to the
ordinary description of quantum mechanics. From \eqref{eq:3.1} it is clear that
two quantum Hamiltonian systems related to each other by some quantum canonical
change of phase space coordinates are gauge equivalent.

For deriving the form of the automorphism $S_T$ for particular transformations
$T$, other form of the conditions \eqref{eq:3.1} will be more useful.

\begin{theorem}
\label{thm:3.1}
The conditions \eqref{eq:3.1} are fulfilled if and only if the conditions
\begin{subequations}
\label{eq:3.3}
\begin{align}
(\hat{q}'_T)^i & = S_T (\hat{q}_M)^i S_T^{-1}, \label{eq:3.3a} \\
(\hat{p}'_T)_j & = S_T (\hat{p}_M)_j S_T^{-1}, \label{eq:3.3b}
\end{align}
\end{subequations}
are fulfilled, where $(\hat{q}'_T)^i = x'^i \star'_T {}$,
$(\hat{p}'_T)_j = p'_j \star'_T {}$,
$(\hat{q}_M)^i = x'^i \star' {}$ and $(\hat{p}_M)_j =
p'_j \star' {}$.
\end{theorem}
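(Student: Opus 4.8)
The plan is to show the equivalence of \eqref{eq:3.1} and \eqref{eq:3.3} by translating the algebraic conditions on the automorphism $S_T$ into statements about the associated operators of left $\star$-multiplication. The key observation is that for any star-product, left multiplication is a faithful representation: the map $f \mapsto (f \star {})$ is an algebra homomorphism from $(\mathcal{A}_Q,\star)$ into operators on $\mathcal{H}$, and since $f = f \star 1$ it is injective. The same holds for $\star'$ and for $\star'_T$. So I would first record that \eqref{eq:3.1a} is equivalent to the operator identity $(S_T f) \star'_T {} = S_T \circ (f \star' {}) \circ S_T^{-1}$ for all $f$; this is just the statement that $S_T$ intertwines the left-regular representations of $\star'$ and $\star'_T$, rewritten using that $S_T$ (viewed as acting on $\mathcal{H}$) is invertible.

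Next I would argue the forward direction: assuming \eqref{eq:3.1}, apply the intertwining identity with $f = x'^i$ and $f = p'_j$. Because of \eqref{eq:3.1b}, $S_T x'^i = x'^i$ and $S_T p'_j = p'_j$, so the left-hand side becomes $x'^i \star'_T {} = (\hat{q}'_T)^i$ and $p'_j \star'_T {} = (\hat{p}'_T)_j$, while the right-hand side is $S_T (x'^i \star' {}) S_T^{-1} = S_T (\hat{q}_M)^i S_T^{-1}$ and similarly for momentum. This is exactly \eqref{eq:3.3}. For the converse, suppose \eqref{eq:3.3} holds. By \thmref{thm:2.1} (in the $N=1$ case; the multi-dimensional case is analogous), every $f \in \mathcal{A}_Q$ is a $\star'$-power series in $x'$ and $p'$, i.e.\ a limit of $\star'$-polynomials $\sum a_{nm}\, x' \star' \dotsb \star' x' \star' p' \star' \dotsb \star' p'$. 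Writing such a polynomial in operator form gives $f \star' {} = \sum a_{nm} (\hat{q}_M)^n (\hat{p}_M)^m$ (ordinary operator composition), and conjugating by $S_T$ and using \eqref{eq:3.3} termwise yields $S_T (f \star' {}) S_T^{-1} = \sum a_{nm} ((\hat{q}'_T))^n ((\hat{p}'_T))^m$, which is precisely the operator $((S_T f')\text{-type expansion})$ — more carefully, it equals $g \star'_T {}$ where $g$ is the $\star'_T$-polynomial with the same coefficients $a_{nm}$. One then checks that this $g$ equals $S_T f$ by using that $S_T$ fixes $x'$, $p'$ and is an algebra map from $\star'$ to $\star'_T$ — but this last step is the point where the logic must be set up carefully so as not to be circular; the cleanest route is to \emph{define} the action of $S_T$ on $\star'$-monomials by sending coefficients $a_{nm}$ to the corresponding $\star'_T$-monomial, check this is well defined (using uniqueness in \thmref{thm:2.1}), and then verify \eqref{eq:3.1a}--\eqref{eq:3.1c} directly. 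Condition \eqref{eq:3.1b} is immediate since $x'$, $p'$ are their own length-one monomials; \eqref{eq:3.1c} follows because complex conjugation on $\mathcal{A}_Q$ in the Moyal scheme reverses $\star'$-products and conjugates coefficients, and the same holds for $\star'_T$, so the monomial bookkeeping is compatible with $*$.

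The main obstacle I anticipate is not any single computation but the bookkeeping around \thmref{thm:2.1}: making the passage between ``$f$ as a $\star'$-series'' and ``$f \star' {}$ as an operator series in $\hat{q}_M, \hat{p}_M$'' fully rigorous, including convergence in the appropriate topology on $C^\infty(\mathbb{R}^{2N})\lshad\hbar\rshad$ (or on the distribution space), and checking that conjugation by $S_T$ can indeed be carried inside the (possibly infinite) sum. A secondary subtlety is ensuring the map built from \eqref{eq:3.3} is genuinely a single well-defined $S_T$ independent of how one writes $f$ as a $\star'$-polynomial — here uniqueness of the coefficients $a_{nm}$ in \eqref{eq:2.1} is what is needed, so I would either quote it or note it follows from the recurrence relations in the proof of \thmref{thm:2.1}. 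Once these points are handled, both directions reduce to applying the intertwining relation to the generators, which is short.
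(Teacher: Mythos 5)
Your overall skeleton --- forward direction by specializing the intertwining relation to the generators $x'^i$, $p'_j$ and using \eqref{eq:3.1b}; converse by reducing to $\star'$-monomials via \thmref{thm:2.1} --- is the same as the paper's. The genuine gap is in the converse, at exactly the step you flag: identifying the function $g$ (the $\star'_T$-monomial series with the same coefficients $a_{nm}$) with $S_T f$. Your proposed repair --- to \emph{define} the action of $S_T$ on $\star'$-monomials by transferring coefficients --- does not prove the theorem as stated. The theorem asserts that the \emph{given} automorphism $S_T$ satisfying \eqref{eq:3.3} also satisfies \eqref{eq:3.1}; constructing a new map $\tilde{S}_T$ with the desired multiplicative property leaves you needing $\tilde{S}_T = S_T$, which is precisely the identification you were trying to avoid, so the circularity is relocated rather than removed.

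The missing idea is simple and is what the paper does. First extract \eqref{eq:3.1b} from \eqref{eq:3.3} by applying both sides of the operator identities to the function identically equal to $1$: since $x'^i \star'_T 1 = x'^i$ while $S_T (\hat{q}_M)^i S_T^{-1} 1 = S_T(x'^i \star' 1) = S_T x'^i$ (using $S_T 1 = 1$), one gets $S_T x'^i = x'^i$, and likewise $S_T p'_j = p'_j$. With \eqref{eq:3.1b} in hand, the intertwining relation $S_T(x'^i \star' h) = x'^i \star'_T S_T h$ can be iterated all the way down a monomial, which shows that $S_T$ sends each $\star'$-monomial to the corresponding $\star'_T$-monomial with the same coefficient --- i.e.\ your $g$ \emph{is} $S_T f$, for the given $S_T$, with no redefinition needed. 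The paper phrases this as the regrouping computation $S_T(\mu \star' g) = S_T x' \star'_T \dotsb \star'_T S_T p' \star'_T S_T g = S_T(\mu) \star'_T S_T g$ for a monomial $\mu$, but the content is identical to closing your argument by evaluation at $1$. Your remarks on convergence and on uniqueness of the coefficients in \eqref{eq:2.1} are reasonable, but the paper works at the same formal level and does not address them either; they are not where your argument actually needs repair.
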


\begin{proof}
For simplicity we will present the proof for a two-dimensional case ($N = 1$).
If the conditions \eqref{eq:3.1} are fulfilled then trivially the conditions
\eqref{eq:3.3} are fulfilled. Assume now, that the conditions \eqref{eq:3.3} are
fulfilled. From \eqref{eq:3.3} it follows that \eqref{eq:3.1a} will be satisfied
for every $f$ in the form of a \nbr{\star'}monomial $x' \star' \dotsb \star' x'
\star' p' \star' \dotsb \star' p'$ as
\begin{align*}
& S_T((\underbrace{x' \star' \dotsb \star' x'}_n \star'
\underbrace{p' \star' \dotsb \star' p'}_m) \star' g)
= S_T x' \star'_T S_T(\underbrace{x' \star' \dotsb \star' x'}_{n - 1} \star'
\underbrace{p' \star' \dotsb \star' p'}_m \star' g) \\
& \qquad = \underbrace{S_T x' \star'_T \dotsb \star'_T S_T x'}_n \star'_T
\underbrace{S_T p' \star'_T \dotsb \star'_T S_T p'}_m \star'_T S_T g \\
& \qquad = \underbrace{S_T x' \star'_T \dotsb \star'_T S_T x'}_n \star'_T
\underbrace{S_T p' \star'_T \dotsb \star'_T S_T p'}_{m - 2} \star'_T
S_T(p' \star' p') \star'_T S_T g \\
& \qquad = S_T(\underbrace{x' \star' \dotsb \star' x'}_n \star'
\underbrace{p' \star' \dotsb \star' p'}_m) \star'_T S_T g.
\end{align*}
From the linearity of $S_T$ and the fact that the general $f \in \mathcal{A}_Q$
can be written in the form of the series \eqref{eq:2.1} follows the condition
\eqref{eq:3.1a}. The condition \eqref{eq:3.1b} can be received by calculating
left and right sides of \eqref{eq:3.3} on function identically equal 1.
Furthermore, the condition \eqref{eq:3.1c} can be proved in a similar manner as
\eqref{eq:3.1a} noting that the complex-conjugation is an involution for the
$\star'$ and \nbr{\star'_T}products, and using \eqref{eq:3.1b}.
\end{proof}

Note that every transformation $T$ of coordinates preserves the involution
of the algebra $\mathcal{A}_Q$ (complex conjugation). Indeed, there holds
\begin{equation*}
(f \circ T)^* = f^* \circ T, \quad f \in \mathcal{A}_Q.
\end{equation*}
We prove the property \eqref{eq:3.1c} of the isomorphism $S_T$. Namely,

\begin{theorem}
For every quantum canonical transformation $T$ the isomorphism $S_T \colon
(C^\infty(\mathbb{R}^{2N}),\star',*) \to (C^\infty(\mathbb{R}^{2N}),\star'_T,*)$
preserves the involution, i.e.,
\begin{equation*}
S_T(f^*) = (S_T f)^*, \quad f \in C^\infty(\mathbb{R}^{2N}).
\end{equation*}
In other words $S_T = S^*_T$.
\end{theorem}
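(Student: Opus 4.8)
The plan is to exploit the characterization of $S_T$ given in Theorem~\ref{thm:3.1}: the automorphism $S_T$ is uniquely determined by the requirement that it conjugates the Moyal position/momentum operators $(\hat{q}_M)^i, (\hat{p}_M)_j$ (associated with $\star'$) into the operators $(\hat{q}'_T)^i, (\hat{p}'_T)_j$ (associated with $\star'_T$), together with its being a power series in $\hbar$ starting at the identity. Since both the involution condition~\eqref{eq:3.1c} and the uniqueness of $S_T$ are already built into the framework of~\eqref{eq:3.1}, the statement to prove is essentially that~\eqref{eq:3.1c} is \emph{automatic} once~\eqref{eq:3.1a} and~\eqref{eq:3.1b} hold. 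So the strategy is: take $S_T$ satisfying~\eqref{eq:3.1a}--\eqref{eq:3.1b}, define $S_T^* f := (S_T(f^*))^*$, and show $S_T^*$ also satisfies~\eqref{eq:3.1a}--\eqref{eq:3.1b} (and is of the form~\eqref{eq:2.3}); then uniqueness forces $S_T = S_T^*$, which is precisely~\eqref{eq:3.1c}.

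First I would check that $S_T^*$ is of the form~\eqref{eq:2.3}. Writing $S_T = \sum_k \hbar^k (S_T)_k$ with $(S_T)_0 = 1$, and noting that complex-conjugation sends $\hbar \to \hbar$ (as $\hbar$ is a real parameter) but is conjugate-linear, one gets $S_T^* = \sum_k \hbar^k (\overline{S_T})_k$ where $(\overline{S_T})_k f = \overline{(S_T)_k(\bar f)}$; the zeroth term is still the identity, so $S_T^*$ has the right form. Next I would verify~\eqref{eq:3.1b} for $S_T^*$: since $x'^i$ and $p'_j$ are real-valued, $S_T^* x'^i = (S_T((x'^i)^*))^* = (S_T x'^i)^* = (x'^i)^* = x'^i$, and similarly for $p'_j$. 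Finally, the key computation is~\eqref{eq:3.1a} for $S_T^*$. Here I would use the crucial fact, already noted in the excerpt, that complex-conjugation is an involution for \emph{both} $\star'$ and $\star'_T$, i.e. $(f \star' g)^* = g^* \star' f^*$ and $(f \star'_T g)^* = g^* \star'_T f^*$. Then
\begin{align*}
S_T^*(f \star' g) &= \bigl(S_T((f \star' g)^*)\bigr)^* = \bigl(S_T(g^* \star' f^*)\bigr)^* \\
&= \bigl((S_T g^*) \star'_T (S_T f^*)\bigr)^* = (S_T f^*)^* \star'_T (S_T g^*)^* = (S_T^* f) \star'_T (S_T^* g).
\end{align*}

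This shows $S_T^*$ satisfies all three defining properties of the automorphism associated with $T$, so by the uniqueness asserted just before Theorem~\ref{thm:3.1} we conclude $S_T^* = S_T$, i.e. $S_T(f^*) = (S_T f)^*$. The main obstacle, and the point that deserves care, is justifying the reversal of order in $(f \star' g)^* = g^* \star' f^*$ together with the analogous identity for $\star'_T$: for $\star'$ this is the standard fact that the Moyal product is a $*$-algebra structure under complex conjugation, and for $\star'_T$ it follows because $\star'_T$ is defined through the commuting real vector fields $D_{x'^i}, D_{p'_j}$ — conjugation turns $\tfrac{1}{2} i\hbar$ into $-\tfrac{1}{2} i\hbar$ while also swapping the left/right arrows, which is exactly the order reversal. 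One should also make sure the uniqueness statement for $S_T$ is invoked at the right level of generality (as a map on all of $\mathcal{A}_Q$), which is fine since, by Theorem~\ref{thm:2.1}, $\mathcal{A}_Q$ is generated by $x'$ and $p'$ under $\star'$, so an automorphism of the form~\eqref{eq:2.3} fixing $x',p'$ and intertwining $\star'$ with $\star'_T$ is determined on all of $\mathcal{A}_Q$.
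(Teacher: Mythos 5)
Your proof is correct, but it follows a genuinely different route from the paper's. The paper works at the level of the Hilbert space of states: it uses the assumption that $S_T$ is a unitary isomorphism $\mathcal{H} \to \mathcal{H}'$, the fact that the Hermitian adjoint of a left-$\star$-multiplication operator is left-$\star$-multiplication by the complex conjugate (for both $\star'$ and $\star'_T$), and a chain of scalar-product identities $\braket{\Phi|f^* \star' \Psi} = \dotsb = \braket{\Phi|S_T^{-1}(S_T f)^* \star' \Psi}$ to conclude $f^* = S_T^{-1}(S_T f)^*$. You instead stay entirely inside the algebra: you define $S_T^* f := (S_T(f^*))^*$, verify that it satisfies \eqref{eq:3.1a}, \eqref{eq:3.1b} and has the form \eqref{eq:2.3}, and invoke uniqueness. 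The two arguments lean on closely related inputs --- the paper's adjoint identity $(g \star'_T {})^\dagger = g^* \star'_T {}$ is essentially equivalent to your order-reversal identity $(f \star'_T g)^* = g^* \star'_T f^*$ --- but yours buys independence from the Hilbert-space assumptions (unitarity of $S_T$ on $L^2$, vanishing under the integral sign), which the paper only postulates for its admissible class of transformations, at the price of needing uniqueness of $S_T$ from conditions \eqref{eq:3.1a}--\eqref{eq:3.1b} alone. You correctly close that gap via \thmref{thm:2.1}: an automorphism of the form \eqref{eq:2.3} fixing $x'^i, p'_j$ and intertwining $\star'$ with $\star'_T$ is determined on all $\star'$-monomials and hence, by linearity, on $\mathcal{A}_Q$ (modulo the same tacit term-by-term extension to infinite $\star'$-power series that the paper itself uses in proving \thmref{thm:3.1}, so this is consistent with the paper's level of rigor).
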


\begin{proof}
Using the fact that $S_T$ is an isomorphism of the space of states $\mathcal{H}$
onto the transformed space of states $\mathcal{H}'$ (see below) the following
equality holds
\begin{align*}
\braket{\Phi|f^* \star' \Psi}_{\mathcal{H}} & =
    \braket{\Phi|(f \star' {})^\dagger \Psi}_{\mathcal{H}}
= \braket{f \star' \Phi|\Psi}_{\mathcal{H}}
= \braket{S_T(f \star' \Phi)|S_T \Psi}_{\mathcal{H}'}
= \braket{S_T f \star'_T S_T \Phi|S_T \Psi}_{\mathcal{H}'} \\
& = \braket{S_T \Phi|(S_T f \star'_T {})^\dagger S_T \Psi}_{\mathcal{H}'}
= \braket{S_T \Phi|(S_T f)^* \star'_T S_T \Psi}_{\mathcal{H}'}
= \braket{\Phi|S_T^{-1}(S_T f)^* \star' \Psi}_{\mathcal{H}}
\end{align*}
for every $\Phi,\Psi \in \mathcal{H}$. From this follows that
$f^* = S_T^{-1}(S_T f)^*$.
\end{proof}

It means that for every quantum canonical transformation $T$ the
isomorphism $S_T$ generates new quantization scheme, equivalent to
the Moyal one, with involution being also complex conjugation.

A transformation $T$ of coordinates transforms the Hilbert space of states
$L^2(\mathbb{R}^{2N})$ into the Hilbert space $L^2(\mathbb{R}^{2N},\mu_T)$ with
the standard scalar product, where $\dd{\mu_T(x',p')} = \abs{\det T'(x',p')}
\ud{x'}\ud{p'}$. Of course, if $T$ is also a classical canonical transformation
then $\dd{\mu_T(x',p')} = \dd{x'}\ud{p'}$, since the Jacobian of a classical
canonical transformation is equal 1. The following property of $S_T$ has to hold
to ensure the consistency of the whole approach to transformations of
coordinates:

\begin{quote}
For a transformation $T$ of coordinates
\begin{equation*}
S_T \colon L^2(\mathbb{R}^{2N}) \to L^2(\mathbb{R}^{2N},\mu_T)
\end{equation*}
and $S_T$ is a Hilbert space isomorphism. Hence the space
$L^2(\mathbb{R}^{2N},\mu_T)$ is a properly defined space of states after the
transformation of coordinates. Moreover, the isomorphism $S_T$ vanishes under
the integral sign, i.e.,
\begin{equation*}
\iint f(x',p') \ud{x'}\ud{p'} = \iint S_T f(x',p') \ud{\mu(x',p')}.
\end{equation*}
\end{quote}

Thus we will consider only such transformations for which the above property is
fulfilled, however, we believe that every quantum canonical transformation
satisfies the above property. Observe that a classically canonical
transformation $T$ transforms the space of states $L^2(\mathbb{R}^{2N})$ onto
itself. Thus $S_T$ in this case is an isomorphism (unitary operator) on
$L^2(\mathbb{R}^{2N})$.

Canonical transformations appear as trajectories in phase space when one
considers the time evolution of a system. This is true both in classical and
quantum mechanics. In phase space quantum mechanics the solution of quantum
Hamiltonian equations
\begin{equation}
\dot{Q}^i(t) = \lshad Q^i(t),H \rshad, \quad
\dot{P}_j(t) = \lshad P_j(t),H \rshad,
\label{eq:3.4}
\end{equation}
where $Q^i(x,p,0) = x^i$ and $P_j(x,p,0) = p_j$, i.e., the Heisenberg
representation \eqref{eq:2.2} for observables of position and momentum,
generates a quantum flow $\Phi_t$ in phase space according to
an equation
\begin{equation}
\Phi_t(x,p;\hbar) = (Q(x,p,t;\hbar),P(x,p,t;\hbar)).
\label{eq:3.5}
\end{equation}
For every instance of time $t$ the map $\Phi_t$ is a quantum canonical
transformation (quantum symplectomorphism) from coordinates $x,p$ to new
coordinates $x' = Q(x,p,t;\hbar), p' = P(x,p,t;\hbar)$. Of course, in the limit
$\hbar \to 0$ any quantum trajectory reduces to a classical
trajectory. More thorough description of quantum trajectories we present in
\cite{Blaszak:2012b}.

\subsection{Transformations which are both classical and quantum canonical}
\label{subsec:3.3}
The simplest subclass of quantum canonical transformations is generated by these
quantum flows which coincide with classical flows. It means that the
quantum canonical transformation \eqref{eq:3.5} is $\hbar$ independent and
coincides with classical symplectomorphism, i.e., is a simultaneous solution of
quantum and classical Hamiltonian equations. Such transformations are generated
from an appropriate generating function for any fixed value of the evolution
parameter $t$.

In what follows we will present four important classes of nonlinear canonical
transformations (both classical and quantum). The linear case is considered
separately in \secref{sec:5}. A sufficient condition for such class of
transformations is a linearity in one set of arguments of an appropriate
generating function. Let us begin with the case of two-dimensional phase space.
The first class of transformations is determined by generating function of the
form
\begin{equation*}
F_1(x,x') = x\phi_1(x') + \phi_2(x')
\end{equation*}
with related transformations expressed by the equations
\begin{equation*}
p = \frac{\partial F_1}{\partial x}(x,x'), \quad
p' = -\frac{\partial F_1}{\partial x'}(x,x'),
\end{equation*}
where $\phi_1$ is some smooth bijective function and $\phi_2$ is some smooth
function. The above equations give the transformation in a form
\begin{equation}
T_1(x',p') = \Bigl(-\bigl(\phi'_1(x')\bigr)^{-1} p'
    - \bigl(\phi'_1(x')\bigr)^{-1} \phi'_2(x'), \phi_1(x')\Bigr),
\label{eq:3.11}
\end{equation}
where $\phi'(x')=\frac{d}{dx'}\phi(x')$. Note, that a generating
function
\begin{equation*}
\tilde{F}_1(x,x') = -x'\phi_1(x) - \phi_2(x)
\end{equation*}
generates a transformation $\tilde{T}_1$ being an inverse transformation to
$T_1$.

The second class is generated by a function
\begin{equation*}
F_2(p,p') = -p\phi_1(p') - \phi_2(p')
\end{equation*}
with a related transformation expressed by the equations
\begin{equation*}
x = -\frac{\partial F_2}{\partial p}(p,p'), \quad
x' = \frac{\partial F_2}{\partial p'}(p,p'),
\end{equation*}
which give
\begin{equation*}
T_2(x',p') = \Bigl(\phi_1(p'), -\bigl(\phi'_1(p')\bigr)^{-1} x'
    - \bigl(\phi'_1(p')\bigr)^{-1} \phi'_2(p')\Bigr).
\end{equation*}
In this case the generating function
\begin{equation*}
\tilde{F}_2(p,p') = p'\phi_1(p) + \phi_2(p)
\end{equation*}
generates a transformation $\tilde{T}_2$ being an inverse transformation to
$T_2$.

The third and fourth classes of transformations are generated by functions
\begin{align*}
F_3(x,p') & = x\phi_1(p') + \phi_2(p'), \\
F_4(x',p) & = -p\phi_1(x') - \phi_2(x'),
\end{align*}
with related transformations expressed by the equations
\begin{equation*}
p = \frac{\partial F_3}{\partial x}(x,p'), \quad
x'= \frac{\partial F_3}{\partial p'}(x,p'),
\end{equation*}
and
\begin{equation*}
x = -\frac{\partial F_4}{\partial p}(x',p), \quad
p'= -\frac{\partial F_4}{\partial x'}(x',p).
\end{equation*}
The above equations give the transformations in a form
\begin{subequations}
\label{eq:3.14}
\begin{align}
T_3(x',p') & = \Bigl(\bigl(\phi'_1(p')\bigr)^{-1} x'
    - \bigl(\phi'_1(p')\bigr)^{-1} \phi'_2(p'), \phi_1(p')\Bigr),
\label{eq:3.14a} \\
T_4(x',p') & = \Bigl(\phi_1(x'), \bigl(\phi'_1(x')\bigr)^{-1} p'
    - \bigl(\phi'_1(x')\bigr)^{-1} \phi'_2(x')\Bigr).
\label{eq:3.14b}
\end{align}
\end{subequations}
In this case the generating functions
\begin{align*}
\tilde{F}_3(x,p') & = -p'\phi_1(x) - \phi_2(x), \\
\tilde{F}_4(x',p) & = x'\phi_1(p) + \phi_2(p)
\end{align*}
generate transformations $\tilde{T}_3 = T_4^{-1}$ and $\tilde{T}_4 = T_3^{-1}$.

Note, that the transformations $T_2$, $T_3$ and $T_4$ can be constructed
from $T_1$ with the help of an interchange of variables transformation
$I(x',p') = (-p', x')$ being a special case of the transformation $T_1$
generated by a function $F(x,x') = xx'$:
\begin{subequations}
\label{eq:3.12}
\begin{align}
T_2 & = I \circ T_1 \circ I^{-1},  \label{eq:3.12a} \\
T_3 & = T_1 \circ I^{-1},          \label{eq:3.12b} \\
T_4 & = I^{-1} \circ T_1.          \label{eq:3.12c}
\end{align}
\end{subequations}
Thus instead of considering the transformation theory for transformations
$T_1$, $T_2$, $T_3$ and $T_4$ it is enough to consider only the transformation
$T_1$ and it appropriate compositions with $I$.

The four presented classes of transformations are obviously classically
canonical. It can be shown that they are also quantum canonical.

The presented considerations can be easily extended to a \nbr{2N}dimensional
case. As an example let us present the transformation \eqref{eq:3.11} in
\nbr{2N}dimensions. Consider a generating function
\begin{equation}
F(x,x') = \sum_i x^i \phi_1^i(x') + \phi_2(x'),
\label{eq:3.13}
\end{equation}
where $\phi_1 = (\phi^1,\dotsc,\phi^N) \colon \mathbb{R}^N \to \mathbb{R}^N$ is
a smooth bijective function and $\phi_2 \colon \mathbb{R}^N \to \mathbb{R}$ a
smooth function, and $x'= (x'^1,\dotsc,x'^N)$. The function $F$ generates the
following transformation
\begin{align*}
T(x',p') & = \left(-\bigl(\phi'_1(x')\bigr)^{-1} \cdot p' -
    \bigl(\phi'_1(x')\bigr)^{-1} \cdot \phi'_2(x'), \phi_1(x')\right) \\
& = \Bigl(-\bigl(J_{\phi_1}^{-1}(x')\bigr)^i_1 p'_i
    - \bigl(J_{\phi_1}^{-1}(x')\bigr)^i_1 \bigl(J_{\phi_2}(x')\bigr)_i,\dotsb,
    - \bigl(J_{\phi_1}^{-1}(x')\bigr)^i_N p'_i
    - \bigl(J_{\phi_1}^{-1}(x')\bigr)^i_N \bigl(J_{\phi_2}(x')\bigr)_i, \\
& \qquad \phi_1^1(x'), \dotsc, \phi_1^N(x') \Bigr),
\end{align*}
where $(J_{\phi_1})_i^j = \frac{\partial \phi_1^j}{\partial x'^i}$ is a Jacobian
matrix of the function $\phi_1 = (\phi_1^1,\dotsc,\phi_1^N)$ and
$(J_{\phi_2})_i = \frac{\partial \phi_2}{\partial x'^i}$ is a Jacobian matrix of
$\phi_2$. A calculation shows that this transformation is also quantum
canonical. All other cases can be extended in a similar fashion.

\subsection{Transformation of observables treated as operators}
\label{subsec:3.4}
Let us introduce the following operator $\hat{T} \colon \mathcal{H} \to
\mathcal{H}'$ of the Hilbert spaces of states $\mathcal{H} =
L^2(\mathbb{R}^{2N})$ and $\mathcal{H}' = L^2(\mathbb{R}^{2N},\mu_T)$ by
the equation
\begin{equation*}
\hat{T} \Psi = \Psi \circ T, \quad \Psi \in \mathcal{H}.
\end{equation*}
The operator $\hat{T}$ is obviously linear and bijective with the inverse
given by
\begin{equation*}
\hat{T}^{-1} \Psi = \Psi \circ T^{-1}, \quad \Psi \in \mathcal{H}'.
\end{equation*}
Moreover, the operator $\hat{T}$ preserves the scalar product, hence it is
an isomorphism of the Hilbert spaces of states. There holds the following
theorem.

\begin{theorem}
Let $T$ be some quantum canonical transformation of coordinates and $A \in
\mathcal{A}_Q$. There holds
\begin{equation*}
\hat{T} A_M(\hat{q}_M, \hat{p}_M) \hat{T}^{-1} =
    A'_{S_T}(\hat{q}'_T, \hat{p}'_T)
\end{equation*}
where $A'= A \circ T$ and
\begin{equation*}
(\hat{q}'_T)^i = x'^i \star'_T {}, \quad (\hat{p}'_T)_j = p'_j \star'_T {}.
\end{equation*}
\end{theorem}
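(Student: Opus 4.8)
The plan is to unwind both sides through the star-product-as-operator dictionary and reduce everything to the already-established identity \eqref{eq:3.3} together with the intertwining property of $\hat{T}$. Recall that $A_M(\hat q_M,\hat p_M) = A \star {}$ acts on $\mathcal{H}$, and the conjugate operator $\hat{T}(A \star {})\hat{T}^{-1}$ acts on $\mathcal{H}'$ by $\Psi \mapsto \bigl((A \star (\Psi \circ T^{-1}))\bigr) \circ T$. By the defining property of the transformed star-product, $(f \star g) \circ T = (f \circ T) \star'_T (g \circ T)$, this equals $(A \circ T) \star'_T \Psi = A' \star'_T {}$ with $A' = A \circ T$. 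So the first step, which is essentially a rewriting, is to show
\begin{equation*}
\hat{T} A_M(\hat{q}_M,\hat{p}_M) \hat{T}^{-1} = A' \star'_T {}.
\end{equation*}

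The second step is to identify $A' \star'_T {}$ with $A'_{S_T}(\hat q'_T,\hat p'_T)$. For this I would invoke the general $S$-ordering result quoted in \subsecref{subsec:2.5}: for a star-product $\star_S$ obtained from a gauge automorphism $S$ fixing $x'^i,p'_j$, one has $B \star_S {} = B_S(\hat q_S,\hat p_S)$ where $B_S(\hat q,\hat p) := (S^{-1}B)_M(\hat q,\hat p)$ and $(\hat q_S)^i = x'^i \star_S {}$, $(\hat p_S)_j = p'_j \star_S {}$. Here the relevant automorphism is $S_T$, which by \eqref{eq:3.1a} satisfies $S_T(f \star' g) = (S_T f) \star'_T (S_T g)$, i.e. $\star'_T = \star_{S_T}$ relative to the reference product $\star'$, and by \eqref{eq:3.1b}--\eqref{eq:3.1c} it fixes $x'^i,p'_j$ and the involution. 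Applying the quoted result with $\star_S = \star'_T$ and $B = A'$ gives $A' \star'_T {} = A'_{S_T}(\hat q'_T,\hat p'_T)$ with $(\hat q'_T)^i = x'^i \star'_T {}$, $(\hat p'_T)_j = p'_j \star'_T {}$, which is exactly the right-hand side. Alternatively, if one prefers a self-contained argument, one can use \thmref{thm:3.1}: write $A'$ as a $\star'_T$-power series in $x',p'$ by first writing $S_T^{-1}A'$ as a $\star'$-power series via \thmref{thm:2.1} and applying $S_T$ termwise using \eqref{eq:3.1a}, then recognize the resulting operator expression, via \eqref{eq:3.3}, as $S_T$-conjugation of the symmetric ordering $(S_T^{-1}A')_M(\hat q_M,\hat p_M)$, which is the definition of $A'_{S_T}$.

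I expect the main obstacle to be bookkeeping rather than conceptual: one must be careful that the operators $\hat q_M,\hat p_M$ appearing in $A_M(\hat q_M,\hat p_M)$ on $\mathcal{H}$ are $x^i \star{}$ and $p_j \star{}$ in the original coordinates, whereas after conjugation by $\hat T$ the natural operators are those of the $\star'$ calculus in the primed coordinates, $(\hat q_M)^i = x'^i \star' {}$ and $(\hat p_M)_j = p'_j \star'{}$ — the notation in the statement silently uses the latter. The key consistency check is that $\hat T(x^i \star{})\hat T^{-1} = x'^i \star' {}$ and $\hat T(p_j \star{})\hat T^{-1} = p'_j \star'{}$, which follows from $(x^i \star g)\circ T = x'^i \star'_T (g\circ T)$ — no wait, one needs the untransformed-in-primed-coordinates version; this is precisely the content of the remark that the star-product written in $x',p'$ variables is $\star'$, and $S_T$ relates $\star'$ to $\star'_T$. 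Once the domains and the three flavors of position/momentum operators ($\hat q_M$ on $\mathcal{H}$, the $\star'$-versions, and the $\star'_T$-versions $\hat q'_T$) are kept straight, the proof is a short chain of the three cited facts: the defining property of $\star'_T$, the gauge-equivalence ordering formula, and \thmref{thm:3.1}.
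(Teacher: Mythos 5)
Your proposal is correct and follows essentially the same route as the paper: the paper's proof is precisely the chain $\hat{T}A_M(\hat{q}_M,\hat{p}_M)\Psi = \hat{T}(A\star\Psi) = A'\star'_T\hat{T}\Psi = A'_{S_T}(\hat{q}'_T,\hat{p}'_T)\hat{T}\Psi$, i.e.\ your step one (the defining property of $\star'_T$) followed by your step two (the $S$-ordering formula of \subsecref{subsec:2.5} applied with $S=S_T$). Your additional remarks on keeping the three flavors of position/momentum operators straight, and the alternative self-contained argument via \thmref{thm:3.1}, only make explicit what the paper leaves implicit.
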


\begin{proof}
Let $\Psi \in \mathcal{H}$. Then there holds
\begin{equation*}
\hat{T} A_M(\hat{q}_M, \hat{p}_M) \Psi = \hat{T}(A \star \Psi)
= A' \star'_T \hat{T} \Psi = A'_{S_T}(\hat{q}'_T, \hat{p}'_T) \hat{T} \Psi.
\end{equation*}
\end{proof}

\begin{corollary}
\begin{align*}
(\hat{q}'_T)^i & = \hat{T} (Q^i)_M(\hat{q}_M, \hat{p}_M) \hat{T}^{-1}, \\
(\hat{p}'_T)_j & = \hat{T} (P_j)_M(\hat{q}_M, \hat{p}_M) \hat{T}^{-1},
\end{align*}
or from the other side
\begin{align*}
(\hat{q}_M)^i & = \hat{T}^{-1} (q^i)_{S_T}(\hat{q}'_T, \hat{p}'_T) \hat{T}, \\
(\hat{p}_M)_j & = \hat{T}^{-1} (p_j)_{S_T}(\hat{q}'_T, \hat{p}'_T) \hat{T},
\end{align*}
where $T^{-1}(x,p) = (Q(x,p), P(x,p))$ and $T(x',p') = (q(x',p'), p(x',p'))$.
\end{corollary}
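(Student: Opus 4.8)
The plan is to read off both pairs of identities from the Theorem immediately preceding the Corollary by feeding it the coordinate functions. For the first pair I would apply that Theorem with $A = Q^i$, and then with $A = P_j$, where $(Q,P)$ are the components of $T^{-1}$; these are smooth away from the measure-zero exceptional set and hence legitimate elements of $\mathcal{A}_Q$. Since the defining relation $T^{-1}(x,p) = (Q(x,p),P(x,p))$ means precisely $Q^i \circ T = x'^i$ and $P_j \circ T = p'_j$ as functions of the new coordinates, the Theorem gives $\hat T (Q^i)_M(\hat q_M,\hat p_M)\hat T^{-1} = (x'^i)_{S_T}(\hat q'_T,\hat p'_T)$ and likewise for $P_j$. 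It then remains to observe that the \nbr{S_T}ordering of a coordinate function is trivial: because $S_T x'^i = x'^i$ and $S_T p'_j = p'_j$ by \eqref{eq:3.1b}, one has $(x'^i)_{S_T}(\hat q'_T,\hat p'_T) = (S_T^{-1}x'^i)_M(\hat q'_T,\hat p'_T) = (x'^i)_M(\hat q'_T,\hat p'_T) = x'^i \star'_T {} = (\hat q'_T)^i$, and similarly $(p'_j)_{S_T}(\hat q'_T,\hat p'_T) = (\hat p'_T)_j$. This yields the first two displayed equations.

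For the second pair I would rewrite the preceding Theorem in the equivalent form $A_M(\hat q_M,\hat p_M) = \hat T^{-1} A'_{S_T}(\hat q'_T,\hat p'_T)\hat T$ with $A' = A\circ T$, and apply it to the coordinate projections $A = x^i$ and $A = p_j$. Then the left-hand sides are exactly $(x^i)_M(\hat q_M,\hat p_M) = x^i\star{} = (\hat q_M)^i$ and $(p_j)_M(\hat q_M,\hat p_M) = (\hat p_M)_j$, while on the right $A' = x^i\circ T = q^i$ and $A' = p_j\circ T$ are, by the very definition $T(x',p') = (q(x',p'),p(x',p'))$, the first and second coordinate blocks of $T$. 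Substituting gives the remaining two identities.

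There is essentially no obstacle here — the Corollary is a direct specialization of the preceding Theorem — so the only real content is bookkeeping: keeping track of which function is composed with $T$ and which with $T^{-1}$ (the inverse appears because $T$ maps the new coordinates to the old ones), and recognizing that the ordering symbols $(\sdot)_M$ and $(\sdot)_{S_T}$ collapse to plain substitution of operators when evaluated on the coordinate functions, which $S_T$ fixes. For completeness one may also remark that $Q^i$, $P_j$, $x^i$, $p_j$ all lie in the domain of validity of the Theorem, which is immediate from smoothness of $T$ and $T^{-1}$ on the relevant open sets.
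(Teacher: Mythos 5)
Your proposal is correct and follows exactly the route the paper intends: the Corollary is an immediate specialization of the preceding Theorem to the coordinate functions $A = Q^i, P_j$ (giving $A' = x'^i, p'_j$) and, in rearranged form, to $A = x^i, p_j$ (giving $A' = q^i, p_j\circ T$), together with the observation that the $M$- and $S_T$-orderings act trivially on coordinate functions since $S_T$ fixes them. The paper omits the proof precisely because it is this direct substitution, so nothing is missing.
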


\section{Transformations of coordinates in ordinary quantum mechanics}
\label{sec:4}

\subsection{General theory}
\label{subsec:4.1}
In the following section we use the formalism of transformations of coordinates
in phase space quantum mechanics, presented in previous section, to develop a
theory of transformations in ordinary quantum mechanics. First, observe that the
transformation $T$ of coordinates induces a unitary operator $\hat{U}_T
\colon L^2(\mathbb{R}^N) \to L^2(\mathbb{R}^N)$ defined on the Hilbert space of
states of the ordinary quantum mechanics, which transforms vector states and
observables to the new coordinate system. Such operator is defined by the
following equation
\begin{equation}
\hat{T}(\varphi^* \otimes_M \psi) = (\varphi^* \otimes_M \psi) \circ T
=: (\hat{U}_T \varphi)^* \otimes_{S_T} \hat{U}_T \psi, \quad
\varphi,\psi \in L^2(\mathbb{R}^N).
\label{eq:4.5}
\end{equation}

The operator $\hat{U}_T$ should be unitary to assure that after a
transformation of coordinates states remain properly defined. We state this
as an another assumption about the considered class of transformations.

From the definition of the operator $\hat{U}_T$ follows an important formula
for the transformation of observables to the new coordinate system.

\begin{theorem}
For $A \in \mathcal{A}_Q$ and an arbitrary quantum canonical transformation $T$
of coordinates there holds
\begin{equation*}
A'_{S_T}(\hat{q}', \hat{p}') = \hat{U}_T A_M(\hat{q}, \hat{p}) \hat{U}_T^{-1},
\end{equation*}
where $A'= A \circ T$ and
\begin{align*}
(\hat{q}^i \psi)(x) & = x^i \psi(x), & (\hat{p}_j \psi)(x)
& = -i\hbar \frac{\partial \psi}{\partial x^j}(x), \\
(\hat{q}'^i \psi')(x') & = x'^i \psi'(x'), & (\hat{p}'_j \psi')(x')
& = -i\hbar \frac{\partial \psi'}{\partial x'^j}(x'),
\end{align*}
for $\psi,\psi' \in L^2(\mathbb{R}^N)$.
\end{theorem}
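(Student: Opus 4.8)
The plan is to bootstrap from the phase-space statement already proved in Subsection~\ref{subsec:3.4}, namely
\[
\hat{T} A_M(\hat{q}_M, \hat{p}_M) \hat{T}^{-1} = A'_{S_T}(\hat{q}'_T, \hat{p}'_T),
\]
and to translate it through the tensor-product decomposition $\mathcal{H} = (L^2(\mathbb{R}^N))^* \otimes_{S} L^2(\mathbb{R}^N)$ from Subsection~\ref{subsec:2.6}, together with the defining relation \eqref{eq:4.5} for $\hat{U}_T$. The key observation is that in Subsection~\ref{subsec:2.6} the operator $A \star_S {}$ acting on $\mathcal{H}$ takes the form $\hat{1} \otimes_S A_S(\hat{q},\hat{p})$, i.e.\ the star-multiplication operator on the phase space acts, under $\otimes_S$, purely on the second (ket) factor as the $S$-ordered operator function of the standard position and momentum operators. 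So the content of the theorem is just this fact stated on both sides of the transformation and glued together by $\hat{T}$ on the phase-space side versus $\hat{U}_T$ on the Hilbert-space side.

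Concretely, first I would take an arbitrary $\Psi = \varphi^* \otimes_M \psi \in \mathcal{H}$ (these span $\mathcal{H}$) and compute $\hat{T} A_M(\hat{q}_M,\hat{p}_M) \Psi$ in two ways. On one hand, by Subsection~\ref{subsec:2.6} applied to the Moyal scheme, $A_M(\hat{q}_M,\hat{p}_M)\Psi = A \star \Psi = \varphi^* \otimes_M A_M(\hat{q},\hat{p})\psi$; applying $\hat{T}$ and using \eqref{eq:4.5} gives $(\hat{U}_T \varphi)^* \otimes_{S_T} \hat{U}_T A_M(\hat{q},\hat{p})\psi$. On the other hand, by the theorem of Subsection~\ref{subsec:3.4}, $\hat{T} A_M(\hat{q}_M,\hat{p}_M)\Psi = A'_{S_T}(\hat{q}'_T,\hat{p}'_T)\,\hat{T}\Psi = (A'\star'_{T} {})\,\hat{T}\Psi$; now apply Subsection~\ref{subsec:2.6} in the $S_T$-quantization scheme (legitimate because, by the theorems just above, $S_T$ is an isomorphism onto a quantization scheme gauge-equivalent to the Moyal one with involution the complex conjugation, so "all previous results for the Moyal-quantization scheme are also valid"), obtaining $A'\star'_{T} \hat{T}\Psi = (\hat{U}_T\varphi)^* \otimes_{S_T} A'_{S_T}(\hat{q}',\hat{p}')\,\hat{U}_T\psi$, where $\hat{q}',\hat{p}'$ are the standard operators in the primed coordinate. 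Comparing the two expressions and using that $\otimes_{S_T}$ determines its ket factor (equivalently, peeling off the fixed bra factor $(\hat{U}_T\varphi)^*$), I get
\[
\hat{U}_T A_M(\hat{q},\hat{p})\psi = A'_{S_T}(\hat{q}',\hat{p}')\,\hat{U}_T\psi
\]
for all $\psi \in L^2(\mathbb{R}^N)$, which is exactly $A'_{S_T}(\hat{q}',\hat{p}') = \hat{U}_T A_M(\hat{q},\hat{p})\hat{U}_T^{-1}$ since $\hat{U}_T$ is unitary (hence invertible).

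The main obstacle I anticipate is the middle step: justifying that the identity $\hat\Xi = \hat 1 \otimes_{S_T} \Xi_{S_T}(\hat q',\hat p')$ for the star-multiplication operator $\hat\Xi = \Xi\star'_T{}$ is valid in the transformed scheme, and in particular that the $\otimes_{S_T}$ appearing when one expands $\hat T(\varphi^*\otimes_M\psi)$ via \eqref{eq:4.5} is the \emph{same} bilinear map that realizes the tensor decomposition $\mathcal{H}' = (L^2(\mathbb{R}^N))^* \otimes_{S_T} L^2(\mathbb{R}^N)$ used in Subsection~\ref{subsec:2.6}. This is a matter of tracking the integral kernel of $S_T$ through both constructions and invoking the stated assumptions that $S_T$ is a Hilbert-space isomorphism onto $L^2(\mathbb{R}^{2N},\mu_T)$ vanishing under the integral sign; once the two $\otimes_{S_T}$'s are identified, the "peeling off the bra factor" step is the routine observation that $\varphi^*\otimes_S\psi_1 = \varphi^*\otimes_S\psi_2$ for all admissible $\varphi$ forces $\psi_1=\psi_2$, which follows from nondegeneracy of $\otimes_S$. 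Everything else is substitution of the already-established formulas.
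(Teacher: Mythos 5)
Your proposal is correct and follows essentially the same route as the paper: the paper also evaluates $(A\star\Psi)\circ T$ for $\Psi=\varphi^*\otimes_M\psi$ in two ways --- once via $A'\star'_T\hat{T}\Psi=(\hat{U}_T\varphi)^*\otimes_{S_T}A'_{S_T}(\hat{q}',\hat{p}')\hat{U}_T\psi$ and once via $(\varphi^*\otimes_M A_M(\hat{q},\hat{p})\psi)\circ T=(\hat{U}_T\varphi)^*\otimes_{S_T}\hat{U}_T A_M(\hat{q},\hat{p})\hat{U}_T^{-1}\hat{U}_T\psi$ --- and compares. Your extra remarks on identifying the two occurrences of $\otimes_{S_T}$ and on nondegeneracy only make explicit steps the paper leaves implicit.
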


\begin{proof}
Let $\Psi = \varphi^* \otimes_M \psi$. Then from one side
\begin{equation*}
(A \star \Psi) \circ T = A' \star'_T \hat{T} \Psi
= (\hat{U}_T \varphi)^* \otimes_{S_T}
    A'_{S_T}(\hat{q}', \hat{p}') \hat{U}_T \psi
\end{equation*}
and from the other side
\begin{align*}
(A \star \Psi) \circ T & = (\varphi^* \otimes_M
    A_M(\hat{q}, \hat{p}) \psi) \circ T
= (\hat{U}_T \varphi)^* \otimes_{S_T}
    \hat{U}_T A_M(\hat{q}, \hat{p}) \psi \\
& = (\hat{U}_T \varphi)^* \otimes_{S_T}
    \hat{U}_T A_M(\hat{q}, \hat{p}) \hat{U}_T^{-1} \hat{U}_T \psi.
\end{align*}
Comparison of the above two formulae implies the result.
\end{proof}

The above result shows that applying the Born's quantization rule to a
transformed classical observable gives an operator unitarily equivalent with an
operator corresponding to an untransformed classical observable, provided that
the ordering of $\hat{q}'$, $\hat{p}'$ will be appropriately changed. This
ensures the consistency of a standard quantization procedure since we can
quantize a classical Hamiltonian system in whatever canonical coordinate system,
provided that we use an appropriate ordering of operators of position and
momentum.

\begin{corollary}
\begin{subequations}
\label{eq:4.1}
\begin{align}
\hat{q}'^i & = \hat{U}_T (Q^i)_M(\hat{q},\hat{p}) \hat{U}_T^{-1},
\label{eq:4.1a} \\
\hat{p}'_j & = \hat{U}_T (P_j)_M(\hat{q},\hat{p}) \hat{U}_T^{-1},
\label{eq:4.1b}
\end{align}
\end{subequations}
or from the other side
\begin{align*}
\hat{q}^i & = \hat{U}_T^{-1} (q^i)_{S_T}(\hat{q}',\hat{p}') \hat{U}_T, \\
\hat{p}_j & = \hat{U}_T^{-1} (p_j)_{S_T}(\hat{q}',\hat{p}') \hat{U}_T,
\end{align*}
where $T^{-1}(x,p) = (Q(x,p), P(x,p))$ and $T(x',p') = (q(x',p'), p(x',p'))$.
\end{corollary}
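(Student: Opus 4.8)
The plan is to obtain all four identities as immediate specializations of the theorem proved just above, namely $A'_{S_T}(\hat{q}',\hat{p}') = \hat{U}_T A_M(\hat{q},\hat{p}) \hat{U}_T^{-1}$ with $A' = A \circ T$, applied to the coordinate functions. No new ingredient is needed beyond that theorem, condition \eqref{eq:3.1b}, and the definition of the $S_T$-ordering.

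First I would establish \eqref{eq:4.1a} and \eqref{eq:4.1b}. Take $A = Q^i$; since $T^{-1}(x,p) = (Q(x,p),P(x,p))$ expresses the new position coordinates through the old ones, we have $A' = Q^i \circ T = x'^i$, and likewise $A = P_j$ gives $A' = P_j \circ T = p'_j$. The theorem then yields $(x'^i)_{S_T}(\hat{q}',\hat{p}') = \hat{U}_T (Q^i)_M(\hat{q},\hat{p}) \hat{U}_T^{-1}$ and the analogous equality for $p'_j$. It remains to note that the left-hand sides collapse to $\hat{q}'^i$ and $\hat{p}'_j$: by definition $B_{S_T}(\hat{q}',\hat{p}') = (S_T^{-1}B)_M(\hat{q}',\hat{p}')$, and since $S_T x'^i = x'^i$, $S_T p'_j = p'_j$ by \eqref{eq:3.1b}, we also have $S_T^{-1} x'^i = x'^i$ and $S_T^{-1} p'_j = p'_j$, while the symmetric ($M$-)ordering of a single linear monomial is that monomial itself. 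This gives \eqref{eq:4.1}.

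For the ``other side'' I would instead read the same theorem in the inverted form $A_M(\hat{q},\hat{p}) = \hat{U}_T^{-1} A'_{S_T}(\hat{q}',\hat{p}') \hat{U}_T$ with $A' = A \circ T$, and now take $A = x^i$ (resp.\ $A = p_j$), the old coordinate functions. Since $T(x',p') = (q(x',p'),p(x',p'))$, we get $A' = x^i \circ T = q^i$ (resp.\ $A' = p_j \circ T = p_j$, the momentum component of $T$), whereas $A_M(\hat{q},\hat{p}) = (x^i)_M(\hat{q},\hat{p}) = \hat{q}^i$ (resp.\ $\hat{p}_j$) by the same triviality of ordering a single variable. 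Hence $\hat{q}^i = \hat{U}_T^{-1}(q^i)_{S_T}(\hat{q}',\hat{p}')\hat{U}_T$ and $\hat{p}_j = \hat{U}_T^{-1}(p_j)_{S_T}(\hat{q}',\hat{p}')\hat{U}_T$.

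There is essentially no obstacle here; the only point requiring a word of care is the reduction of the $S_T$-ordered coordinate functions to the bare canonical operators, which rests precisely on \eqref{eq:3.1b} and on the definition $B_{S_T} = (S_T^{-1}B)_M$. One should also be careful, in writing the two displays, to keep the conventions of the statement: the components called $Q,P$ are those of $T^{-1}$ and the components called $q,p$ are those of $T$, so that $Q^i \circ T = x'^i$ and $x^i \circ T = q^i$ come out correctly. Everything else is a direct substitution into the preceding theorem.
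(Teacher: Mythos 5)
Your proposal is correct and follows exactly the route the paper intends: the corollary is obtained by specializing the preceding theorem to $A = Q^i, P_j$ (giving $A' = x'^i, p'_j$) and to $A = x^i, p_j$ (giving $A' = q^i, p_j$), with the reduction $(x'^i)_{S_T}(\hat{q}',\hat{p}') = \hat{q}'^i$ resting on \eqref{eq:3.1b} and the definition of the $S_T$-ordering, just as you note. The paper leaves this derivation implicit, and your write-up supplies precisely the missing substitutions with the correct conventions for $Q,P$ versus $q,p$.
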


The above result can be understood as follows. An operator of position
corresponding to a new coordinate system can be simply defined as
$Q_M(\hat{q},\hat{p})$. This operator can be written in a position
representation, i.e., as an operator of multiplication by coordinate variable.
More precisely, there exist a unitary operator $\hat{U}$ such that
\begin{equation*}
\hat{U} Q_M(\hat{q}, \hat{p}) \hat{U}^{-1} = x'.
\end{equation*}
From previous result $\hat{U} = \hat{U}_T$.

\subsection{The form of the operator $\hat{U}_T$}
\label{subsec:4.2}
\Eqsref{eq:4.1} can be used to derive the form of the operator
$\hat{U}_T$ for considered class of transformations. Indeed, for a given
transformation $T$ it is enough to find such unitary operator $\hat{U}_T$
which will satisfy \eqref{eq:4.1}. Such operator will be unique.

In what follows let us find the form of the unitary operator $\hat{U}_T$, in a
two-dimensional case, for a transformation $T_1(x',p') = (-(\phi'_1(x'))^{-1} p'
- (\phi'_1(x'))^{-1} \phi'_2(x'), \phi_1(x'))$ generated by a function
$F_1(x,x') = x\phi_1(x') + \phi_2(x')$ from \subsecref{subsec:3.3}. One
calculates that for $\hat{U}_{T_1}$ in the form
\begin{equation}
(\hat{U}_{T_1} \varphi)(x') = \frac{1}{\sqrt{2\pi\hbar}} \int \varphi(x)
    \abs{\phi'_1(x')}^{1/2} e^{-\frac{i}{\hbar}(x\phi_1(x') + \phi_2(x'))}\ud{x}
\label{eq:4.2}
\end{equation}
the inverse operator to $\hat{U}_{T_1}$ reads
\begin{equation}
(\hat{U}_{T_1}^{-1} \varphi)(x) = \frac{1}{\sqrt{2\pi\hbar}} \int
\varphi(x') \abs{\phi'_1(x')}^{1/2} e^{\frac{i}{\hbar}(x\phi_1(x')
+ \phi_2(x'))}\ud{x'}
 \label{eq:4.2a}
\end{equation}
and that \eqref{eq:4.1} are indeed satisfied. To prove this note that
$Q_M(\hat{q},\hat{p}) = \phi_1^{-1}(\hat{p})$. Then
\begin{align*}
(\hat{U}_{T_1} Q_M(\hat{q},\hat{p}) \hat{U}_{T_1}^{-1} \varphi)(x') & =
    \frac{1}{2\pi\hbar} \iint \phi_1^{-1}(-i\hbar\partial_x) \left( \varphi(x'')
    \abs{\phi'_1(x'')}^{1/2} e^{\frac{i}{\hbar}(x\phi_1(x'') + \phi_2(x''))}
    \right) \\
& \quad \times \abs{\phi'_1(x')}^{1/2}
    e^{-\frac{i}{\hbar}(x\phi_1(x') + \phi_2(x'))} \ud{x}\ud{x''} \\
& = \frac{1}{2\pi\hbar} \iint x'' \varphi(x'')
    \abs{\phi'_1(x'') \phi'_1(x')}^{1/2}
    e^{\frac{i}{\hbar}(\phi_1(x'') - \phi_1(x'))x}
    e^{\frac{i}{\hbar}(\phi_2(x'') - \phi_2(x'))} \ud{x}\ud{x''} \\
& = \int x'' \varphi(x'') \abs{\phi'_1(x'') \phi'_1(x')}^{1/2}
    \delta(\phi_1(x'') - \phi_1(x'))
    e^{\frac{i}{\hbar}(\phi_2(x'') - \phi_2(x'))} \ud{x''}.
\end{align*}
After introducing new variable $\tilde{x}'' = \phi_1(x'')$ one gets that
$\dd{x''} = \abs{\phi'_1(\phi_1^{-1}(\tilde{x}''))}^{-1} \ud{\tilde{x}''}$
and the above formula can be written in a form
\begin{align*}
(\hat{U}_{T_1} Q_M(\hat{q},\hat{p}) \hat{U}_{T_1}^{-1} \varphi)(x') & =
    \int \phi_1^{-1}(\tilde{x}'') \varphi(\phi_1^{-1}(\tilde{x}''))
    \abs{\phi'_1(\phi_1^{-1}(\tilde{x}''))}^{1/2} \abs{\phi'_1(x')}^{1/2}
    \delta(\tilde{x}'' - \phi_1(x')) \\
& \quad\times e^{\frac{i}{\hbar}(\phi_2(\phi_1^{-1}(\tilde{x}'')) - \phi_2(x'))}
    \ud{\tilde{x}''} \\
& = x'\varphi(x').
\end{align*}
\Eqref{eq:4.1b} can be proved analogically. Note, that
\eqref{eq:4.2} and \eqref{eq:4.2a} can be written in a general
form
\begin{equation}
(\hat{U}_{T_1} \varphi)(x') = \frac{1}{\sqrt{2\pi\hbar}} \int \varphi(x)
    \sqrt{\Abs{\frac{\partial^2 F_1}{\partial x \partial x'}(x,x')}}
    e^{-\frac{i}{\hbar} F_1(x,x')} \ud{x},
\label{eq:4.3}
\end{equation}
\begin{equation}
(\hat{U}_{T_1}^{-1} \varphi)(x) = \frac{1}{\sqrt{2\pi\hbar}} \int
\varphi(x')
    \sqrt{\Abs{\frac{\partial^2 F_1}{\partial x \partial x'}(x,x')}}
    e^{\frac{i}{\hbar} F_1(x,x')} \ud{x'}.
\label{eq:4.3a}
\end{equation}
A direct calculations show that $\hat{U}_{T_1}$ is indeed a
unitary operator on the Hilbert space $L^2(\mathbb{R})$.

Let us also find the form of the unitary operator $\hat{U}_T$, also in a
two-dimensional case, for the point transformation $T_4(x',p') = (\phi_1(x'),
(\phi'_1(x')^{-1} p' - (\phi'_1(x')^{ -1} \phi'_2(x'))$ generated by the
function $F_4(x',p) = -p\phi_1(x') - \phi_2(x')$. One calculates that
$\hat{U}_{T_4}$ in the form
\begin{align}
(\hat{U}_{T_4} \varphi)(x') & = \frac{1}{\sqrt{2\pi\hbar}} \int
    \tilde{\varphi}(p)
    \sqrt{\Abs{\frac{\partial^2 F_4}{\partial x' \partial p}(x',p)}}
    e^{\frac{i}{\hbar} F_4(x',p)} \ud{p}
= \frac{1}{\sqrt{2\pi\hbar}} \int \tilde{\varphi}(p) \abs{\phi'_1(x')}^{1/2}
    e^{-\frac{i}{\hbar}(p\phi_1(x') + \phi_2(x'))} \ud{p} \nonumber \\
& = \abs{\phi'_1(x')}^{1/2} e^{-\frac{i}{\hbar} \phi_2(x')} \varphi(\phi_1(x')),
\label{eq:4.4}
\end{align}
where $\tilde{\varphi}$ denotes the inverse Fourier transform of $\varphi$,
indeed satisfies \eqref{eq:4.1}. The inverse operator to $\hat{U}_{T_4}$ is
then given by
\begin{align*}
(\hat{U}_{T_4}^{-1} \varphi)(x) & =
\frac{1}{\sqrt{2\pi\hbar}}\mathcal{F}^{-1} \int
    \varphi(x')
    \sqrt{\Abs{\frac{\partial^2 F_4}{\partial x' \partial p}(x',p)}}
    e^{-\frac{i}{\hbar} F_4(x',p)} \ud{x'} \\
    & =\frac{1}{2\pi\hbar} \iint \varphi(x')
    \abs{\phi'_1(x')}^{1/2} e^{\frac{i}{\hbar}(p\phi_1(x') + \phi_2(x') - xp)}
    \ud{x'}\ud{p} \\
& = \abs{\phi'_1(\phi_1^{-1}(x))}^{-1/2}
    e^{\frac{i}{\hbar}\phi_2(\phi_1^{-1}(x))} \varphi(\phi_1^{-1}(x)).
\end{align*}
One can show that, in accordance with \eqref{eq:3.12c}, the operator
$\hat{U}_{T_4}$ can be constructed from the operator $\hat{U}_{T_1}$ from
\eqref{eq:4.2} and the Fourier transform
\begin{equation*}
\hat{U}_{T_4} = \hat{U}_{I^{-1} \circ T_1} = \hat{U}_{T_1} \hat{U}_I^{-1}
= \hat{U}_{T_1} \mathcal{F}^{-1},
\end{equation*}
because $\hat{U}_I$ is exactly the Fourier transform $\mathcal{F}$. In
analogical way are constructed unitary operators $\hat{U}_{T_2}$ and
$\hat{U}_{T_3}$:
\begin{align*}
\hat{U}_{T_2} & = \hat{U}_{I \circ T_1 \circ I^{-1}}
= \hat{U}_I^{-1} \hat{U}_{T_1} \hat{U}_I
= \mathcal{F}^{-1} \hat{U}_{T_1} \mathcal{F}, \\
\hat{U}_{T_3} & = \hat{U}_{T_1 \circ I^{-1}} = \hat{U}_I^{-1} \hat{U}_{T_1}
= \mathcal{F}^{-1} \hat{U}_{T_1}.
\end{align*}

For a \nbr{2N}dimensional case and the transformation $T$ given by the
generating function \eqref{eq:3.13}, the respective operator $\hat{U}_T$ takes
the form
\begin{align*}
(\hat{U}_T \varphi)(x') & = \frac{1}{(2\pi\hbar)^{N/2}} \int \varphi(x)
    \abs{\det \phi'_1(x')}^{1/2}
    e^{-\frac{i}{\hbar}(\sum_i x^i \phi_1^i(x') + \phi_2(x'))} \ud{x} \\
& = \frac{1}{(2\pi\hbar)^{N/2}} \int \varphi(x)
    \sqrt{\Abs{\det\left[\frac{\partial^2 F}{\partial x^i \partial x'^j}(x,x')
    \right]}} e^{-\frac{i}{\hbar} F(x,x')} \ud{x}.
\end{align*}

\section{Examples}
\label{sec:5}

\subsection{Linear transformations}
\label{subsec:5.1}
In what follows the well known linear transformations of coordinates of a
quantum phase space will be reconsidered in the frame of formalism developed in
previous sections. The linear transformation is a transformation $T \colon
\mathbb{R}^2 \to \mathbb{R}^2$ given by the equation
\begin{equation*}
T(x',p') = (dx' - bp', -cx' + ap'),
\end{equation*}
where $a,b,c,d \in \mathbb{R}$. Moreover, it is assumed that $ad - bc = 1$,
which makes this transformation canonical both in a classical and quantum
phase space, i.e., it preserves both Poisson bracket and star-commutator. The
inverse transformation is given by the following equation
\begin{equation*}
T^{-1}(x,p) = (ax + bp, cx + dp).
\end{equation*}

The linear transformation $T$ is generated by a function $F(x,x') =
\frac{1}{b} xx' - \frac{a}{2b} x^2 - \frac{d}{2b} x'^2$, i.e.,
\begin{equation*}
p = \frac{\partial F}{\partial x}(x,x'), \quad
p' = -\frac{\partial F}{\partial x'}(x,x'),
\end{equation*}
where $(x,p) = T(x',p')$.

For a given function $f \in C^\infty(\mathbb{R}^2)$ the derivatives of the
function $f$ transform as follows
\begin{align*}
\frac{\partial f}{\partial x} \circ T & = a \frac{\partial}{\partial x'}
    (f \circ T) + c \frac{\partial}{\partial p'} (f \circ T), \\
\frac{\partial f}{\partial p} \circ T & = b \frac{\partial}{\partial x'}
    (f \circ T) + d \frac{\partial}{\partial p'} (f \circ T).
\end{align*}
Using the above formulae one easily finds that the linear transformation $T$
preserves the \nbr{\star}product, i.e., the \nbr{\star}product do not change
after the transformation of coordinates
\begin{equation*}
(f \star g) \circ T = (f \circ T) \star' (g \circ T), \quad
f,g \in C^{\infty}(\mathbb{R}^2).
\end{equation*}
Of course, in this case the isomorphism $S_T = 1$.

Now, let us prove by direct calculation that the unitary operator
$\hat{U}_T \colon L^2(\mathbb{R}) \to L^2(\mathbb{R})$ of the Hilbert space
$L^2(\mathbb{R})$ induced by $T$ indeed is of the form \eqref{eq:4.3}. In other
words, let us show that $\hat{U}_T$ given by \eqref{eq:4.3} satisfies the
formula
\begin{equation}
\hat{T}(\varphi^* \otimes_M \psi) = (\varphi^* \otimes_M \psi) \circ T
= (\hat{U}_T \varphi)^* \otimes_{M} \hat{U}_T \psi,
\label{eq:5.1}
\end{equation}
for every $\varphi,\psi \in L^2(\mathbb{R})$.

Indeed, the twisted tensor product of $\varphi^*$
and $\psi$ takes the form
\begin{equation*}
(\varphi^* \otimes_{M} \psi)(x,p) = \frac{1}{\sqrt{2\pi\hbar}} \int \vd{y}
    \varphi^*\left(x - \frac{1}{2}y\right) \psi\left(x + \frac{1}{2} y\right)
    e^{-\frac{i}{\hbar} py}.
\end{equation*}
Using the above formula the left side of \eqref{eq:5.1} takes the form
\begin{align*}
\hat{T}(\varphi^* \otimes_{M} \psi)(x',p') & =
    (\varphi^* \otimes_{M} \psi)(T(x',p')) \\
& = \frac{1}{\sqrt{2\pi\hbar}} \int \vd{y}
    \varphi^*\left(dx' - bp' - \frac{1}{2} y\right)
    \psi\left(dx' - bp' + \frac{1}{2} y\right)
    e^{-\frac{i}{\hbar} (-cx' + ap')y}.
\end{align*}
The right side of \eqref{eq:5.1} takes the form
\begin{align*}
((\hat{U}_T \varphi)^* \otimes_{M} \hat{U}_T \psi)(x',p') & =
    \frac{1}{\sqrt{2\pi\hbar}} \int \vd{y'}
    (\hat{U}_T \varphi)^*\left(x' - \frac{1}{2} y'\right)
    (\hat{U}_T \psi)\left(x' + \frac{1}{2} y'\right) e^{-\frac{i}{\hbar}p'y'} \\
& = \frac{1}{(2\pi\hbar)^{3/2} \abs{b}} \iiint \vd{x} \vd{y} \vd{y'}
    \varphi^*(x) \psi(y) e^{\frac{i}{\hbar} F(x, x' - \frac{1}{2} y')}
    e^{-\frac{i}{\hbar} F(y, x' + \frac{1}{2} y')} e^{-\frac{i}{\hbar} p'y'} \\
& = \frac{1}{(2\pi\hbar)^{3/2} \abs{b}} \iint \vd{x}\vd{y} \varphi^*(x) \psi(y)
    e^{-\frac{i}{\hbar}(\frac{1}{b}(y - x)x' - \frac{a}{2b}(y^2 - x^2))} \\
& \quad \times \int \vd{y'} e^{-\frac{i}{\hbar}(\frac{1}{2b}(x + y)
    - \frac{d}{b}x' + p')y'}.
\end{align*}
After changing the variables under the integral sign from $(x,y)$ to
$(x - \frac{1}{2} y, x + \frac{1}{2} y)$ and from $y'$ to $by'$ the above
equation takes the form
\begin{align*}
((\hat{U}_T \varphi)^* \otimes_{M} \hat{U}_T \psi)(x',p') & =
    \frac{1}{(2\pi\hbar)^{3/2}} \iint \vd{x}\vd{y}
    \varphi^*\left(x - \frac{1}{2}y\right) \psi\left(x + \frac{1}{2} y\right)
    e^{-\frac{i}{\hbar}(\frac{1}{b}yx' - \frac{a}{b}xy)} \\
& \quad \times \int \vd{y'} e^{-\frac{i}{\hbar}(x - dx' + bp')y'} \\
& = \frac{1}{\sqrt{2\pi\hbar}} \iint \vd{x}\vd{y}
    \varphi^*\left(x - \frac{1}{2}y\right) \psi\left(x + \frac{1}{2} y\right)
    e^{-\frac{i}{\hbar}(\frac{1}{b}yx' - \frac{a}{b}xy)}
    \delta(x - dx' + bp') \\
& = \frac{1}{\sqrt{2\pi\hbar}} \int \vd{y}
    \varphi^*\left(dx' - bp' - \frac{1}{2} y\right)
    \psi\left(dx' - bp' + \frac{1}{2} y\right)
    e^{-\frac{i}{\hbar} (-cx' + ap')y}.
\end{align*}

Linear transformations of coordinates appear when considering quantum
trajectories generated by Hamiltonian functions, quadratic with respect to
canonical coordinates. Take as an example the Hamiltonian of harmonic oscillator
\begin{equation*}
H(x,p) = \frac{1}{2} \left( p^2 + \omega^2 x^2 \right).
\end{equation*}
It happens that in such case the quantum trajectory coincides with
the classical one. Thus a classical/quantum trajectory $\Phi_t =
(Q(t),P(t))$ of a harmonic oscillator takes the form
\begin{equation*}
Q(x,p,t) = x \cos \omega t + \omega^{-1} p \sin \omega t, \quad
P(x,p,t) = p \cos \omega t - \omega x \sin \omega t.
\end{equation*}
One can immediately see that for every $t$ the above equations
define a linear canonical transformation with generating function
\begin{equation*}
F(x,x')=\frac{\omega}{\sin \omega t}xx'-\frac{1}{2}\omega \cot
\omega t \, x^2-\frac{1}{2}\omega \cot \omega t \, x'^2.
\end{equation*}

Concluding, the above results show that applying the Born's
quantization rule to any classical observable transformed by
canonical linear coordinate transformation, gives an operator
unitary equivalent to an operator corresponding to untransformed
classical observable, with the same ordering of $\hat{q}'$ and
$\hat{p}'$.

\subsection{A class of nonlinear transformations}
\label{subsec:5.2}
In the following example a class of nonlinear transformations of coordinates of
a quantum phase space will be considered. The transformation in question will be
taken in the form
\begin{equation*}
T(x',p') = (-ap' - a\phi'(x'), a^{-1}x').
\end{equation*}
The transformation $T$ is generated by the function $F(x,x') = a^{-1} xx' +
\phi(x')$ ($a \in \mathbb{R}$, $a \neq 0$, $\phi$ being an arbitrary smooth
function). This transformation is a classical canonical transformation. For a
given function $f \in C^\infty(\mathbb{R}^2)$ the derivatives of the function
$f$ transform as follows
\begin{align*}
\frac{\partial f}{\partial x} \circ T & =
    -\frac{1}{a} \frac{\partial}{\partial p'}(f \circ T), \\
\frac{\partial f}{\partial p} \circ T & =
    a\frac{\partial}{\partial x'} (f \circ T)
    - a\phi''(x') \frac{\partial}{\partial p'}(f \circ T).
\end{align*}
Hence the Moyal \nbr{\star}product transforms to the following
product
\begin{equation*}
f \star'_T g = f \exp \left( \frac{1}{2} i\hbar \overleftarrow{D_{x'}}
    \overrightarrow{D_{p'}} - \frac{1}{2} i\hbar \overleftarrow{D_{p'}}
    \overrightarrow{D_{x'}} \right) g,
\end{equation*}
where
\begin{align*}
D_{x'} & = a^{-1} \partial_{p'}, \\
D_{p'} & = a \partial_{x'} - a\phi''(x') \partial_{p'}.
\end{align*}

One can calculate that
\begin{align*}
\hat{q}'_T & = x' \star'_T {} = x' + \frac{1}{2} i\hbar \partial_{p'}, \\
\hat{p}'_T & = p' \star'_T {}
= p' - \frac{1}{2} i\hbar \partial_{x'} - \sum_{n = 2}^\infty \frac{1}{n!}
    \left( \frac{i\hbar}{2} \right)^n \phi^{(n + 1)}(x') \partial_{p'}^{n},
\end{align*}
and that
\begin{equation*}
[\hat{q}'_T, \hat{p}'_T] = i\hbar.
\end{equation*}
The above equation shows that the transformation $T$ is a quantum canonical
transformation.

The isomorphism $S_T$ from \eqref{eq:3.1} intertwining the Moyal
\nbr{\star'}product with the \nbr{\star'_T}product is given by
\begin{align*}
S_T & = \exp \left( -\sum_{n = 1}^\infty \frac{1}{(2n + 1)!} (-1)^n
    \left( \frac{\hbar}{2} \right)^{2n} \phi^{(2n + 1)}(x')
    \partial_{p'}^{2n + 1} \right) \\
& = \exp \left( \frac{i}{\hbar}
    \left( \phi \left(x' + \frac{1}{2}i\hbar \partial_{p'} \right)
    - \phi \left(x'- \frac{1}{2}i\hbar \partial_{p'} \right)
    - i\hbar \phi'(x') \partial_{p'} \right) \right) \\
& = \exp \left( \frac{i}{\hbar} \left( \phi(\hat{q}_M) - \phi'(x') \hat{q}_M
    - \phi(\hat{q}^*_M) + \phi'(x') \hat{q}^*_M \right) \right).
\end{align*}

Indeed, from \thmref{thm:3.1} it is enough to prove that
\begin{align*}
\hat{q}'_T & = S_T \hat{q}_M S_T^{-1}, \\
\hat{p}'_T & = S_T \hat{p}_M S_T^{-1}.
\end{align*}
Since $S_T = e^{\hat{A}}$ where $\hat{A} = -\sum_{n = 1}^\infty
\frac{1}{(2n + 1)!} (-1)^n \left(\frac{\hbar}{2}\right)^{2n} \phi^{(2n + 1)}(x')
\partial_{p'}^{2n + 1}$ the above equations, from the Hadamard's lemma, take the
form
\begin{subequations}
\label{eq:5.2}
\begin{align}
\hat{q}'_T & = e^{[\hat{A}, \sdot]} \hat{q}_M,
\label{eq:5.2a} \\
\hat{p}'_T & = e^{[\hat{A}, \sdot]} \hat{p}_M.
\label{eq:5.2b}
\end{align}
\end{subequations}
One easily calculates that $[\hat{A}, \hat{q}_M] = 0$. Thus
\begin{equation*}
e^{[\hat{A}, \sdot]} \hat{q}_M = \hat{q}_M = \hat{q}'_T,
\end{equation*}
which proves \eqref{eq:5.2a}.

On the other hand, one finds that
\begin{align*}
[\hat{A}, \hat{p}_M] & = -\sum_{n = 1}^\infty \frac{1}{(2n)!} (-1)^n
    \left( \frac{\hbar}{2} \right)^{2n} \phi^{(2n + 1)}(x') \partial_{p'}^{2n}
    - \frac{1}{2} i\hbar \sum_{n = 1}^\infty \frac{1}{(2n + 1)!} (-1)^n
    \left(\frac{\hbar}{2}\right)^{2n} \phi^{(2n + 2)}(x') \partial_{p'}^{2n + 1}
\\
& = -\sum_{n = 2}^\infty \frac{1}{n!} \left( \frac{i\hbar}{2} \right)^n
    \phi^{(n + 1)}(x') \partial_{p'}^{n}
\end{align*}
and that
\begin{equation*}
[\hat{A}, [\hat{A}, \hat{p}_M]] = 0.
\end{equation*}
Thus
\begin{equation*}
e^{[\hat{A}, \sdot]} \hat{p}_M = p' - \frac{1}{2} i\hbar
    \partial_{x'} - \sum_{n = 2}^\infty \frac{1}{n!}
    \left( \frac{i\hbar}{2} \right)^n \phi^{(n + 1)}(x') \partial_{p'}^{n}
= \hat{p}'_T,
\end{equation*}
which proves \eqref{eq:5.2b}.

To derive the second formula on $S_T$ from the first one note that
\begin{align*}
S_T & = \exp \left( -\sum_{n = 1}^\infty \frac{1}{(n + 1)!}
    \left( \frac{i\hbar}{2} \right)^n \frac{1 + (-1)^n}{2} \phi^{(n + 1)}(x')
    \partial_{p'}^{n + 1} \right) \\
& = \exp \left( \frac{i}{\hbar} \sum_{n = 2}^\infty \frac{1}{n!}
    \left( 1 - (-1)^n \right) \phi^{(n)}(x')
    \left( \frac{1}{2} i\hbar \partial_{p'} \right)^n \right) \\
& = \exp \left( \frac{i}{\hbar} \left( \phi \left(x' + \frac{1}{2} i\hbar
    \partial_{p'}\right) - \phi(x') - \phi'(x') \frac{1}{2} i\hbar \partial_{p'}
    - \phi \left(x' - \frac{1}{2} i\hbar \partial_{p'} \right) + \phi(x')
    - \phi'(x') \frac{1}{2} i\hbar \partial_{p'} \right) \right) \\
& = \exp \left( \frac{i}{\hbar} \left( \phi \left(x' + \frac{1}{2} i\hbar
    \partial_{p'} \right) - \phi \left(x' - \frac{1}{2} i\hbar \partial_{p'}
    \right)
    - i\hbar \phi'(x') \partial_{p'} \right) \right).
\end{align*}

Observe, that the map $S_T$ preserves the involution, i.e.,
\begin{equation*}
S_T(f^*) = (S_T f)^*, \quad f \in \mathcal{A}_Q,
\end{equation*}
which can be immediately seen from the fact that $S_T = S^*_T$.

In accordance with the assumption made in \subsecref{subsec:3.1} concerning
$S_T$ the map $S_T$ is an isomorphism (unitary operator) of the space of states
$\mathcal{H} = L^2(\mathbb{R}^2)$ onto the transformed space of states
$\mathcal{H}' = L^2(\mathbb{R}^2)$.

Indeed, an operator of the form $\exp \left( \frac{i}{\hbar}
f(x', i\hbar \partial_{p'}) \right)$ ($f \colon \mathbb{R}^2 \to \mathbb{R}$
being some smooth function) transforms square integrable functions into square
integrable functions. In fact, this operator can be written in the following
form
\begin{equation*}
\exp \left( \frac{i}{\hbar} f(x', i\hbar \partial_{p'}) \right) =
    \mathcal{F}_p^{-1} \exp \left( \frac{i}{\hbar} f(x', \eta) \right)
    \mathcal{F}_p,
\end{equation*}
as a composition of operators transforming square integrable functions into
square integrable functions. Hence $S_T \colon L^2(\mathbb{R}^2) \to
L^2(\mathbb{R}^2)$. The map $S_T$ is obviously linear and injective, moreover,
it is easy to show that $\braket{\Phi | S_T \Psi}_{L^2} =
\braket{S_T^{-1} \Phi | \Psi}_{L^2}$, thus $S_T$ is an isomorphism of
$L^2(\mathbb{R}^2)$ onto $L^2(\mathbb{R}^2)$.

Now, let us prove by direct calculation that the unitary operator
$\hat{U}_T \colon L^2(\mathbb{R}) \to L^2(\mathbb{R})$ of the Hilbert space
$L^2(\mathbb{R})$ induced by $T$ indeed is of the form \eqref{eq:4.3}. In other
words, let us show that $\hat{U}_T$ given by \eqref{eq:4.3} satisfies the
formula
\begin{equation}
\hat{T}(\varphi^* \otimes_M \psi) = (\varphi^* \otimes_M \psi) \circ T
= (\hat{U}_T \varphi)^* \otimes_{S_T} \hat{U}_T \psi,
\label{eq:5.3}
\end{equation}
for every $\varphi,\psi \in L^2(\mathbb{R})$.

Indeed, the left side of \eqref{eq:5.3} takes the form
\begin{equation*}
(\varphi^* \otimes_M \psi)(T(x',p')) = \frac{1}{\sqrt{2\pi\hbar}}
    \int \vd{y'} \varphi^*\left(-ap' - a\phi'(x') - \frac{1}{2} y'\right)
    \psi\left(-ap' - a\phi'(x') + \frac{1}{2} y'\right)
    e^{-\frac{i}{\hbar} a^{-1}x'y'}.
\end{equation*}
The right side of \eqref{eq:5.3} takes the form
\begin{align}
((\hat{U}_T \varphi)^* \otimes_{S_T} \hat{U}_T \psi)(x',p') & =
    S_T \frac{1}{\sqrt{2\pi\hbar}} \int \vd{y''}
    (\hat{U}_T\varphi)^*\left(x' - \frac{1}{2} y''\right)
    (\hat{U}_T\psi)\left(x' + \frac{1}{2} y''\right) e^{-\frac{i}{\hbar}p'y''}
    \nonumber \\
& = S_T \frac{1}{(2\pi\hbar)^{3/2}} |a|^{-1} \iiint \vd{y''} \vd{x} \vd{y}
    \varphi^*(x) e^{\frac{i}{\hbar}(a^{-1}x(x' - \frac{1}{2} y'')
    + \phi(x' - \frac{1}{2} y''))} \nonumber \\
& \quad \times \psi(y) e^{-\frac{i}{\hbar}(a^{-1}y(x' + \frac{1}{2} y'')
    + \phi(x' + \frac{1}{2} y''))} e^{-\frac{i}{\hbar} p'y''}.
\label{eq:5.4}
\end{align}
Since
\begin{align*}
& \exp \left( \frac{i}{\hbar} \left( \phi \left(x' + \frac{1}{2} i\hbar
    \partial_{p'} \right) - \phi \left(x' - \frac{1}{2} i\hbar \partial_{p'}
    \right)
    - i\hbar \phi'(x') \partial_{p'} \right) \right)
    g(x') e^{-\frac{i}{\hbar} p'y''} = \\
& \qquad = \exp \left( \frac{i}{\hbar} \left( \phi \left(x' + \frac{1}{2} y''
    \right) - \phi \left(x' - \frac{1}{2} y'' \right)
    - \phi'(x') y'' \right) \right) g(x') e^{-\frac{i}{\hbar} p'y''},
\end{align*}
\eqref{eq:5.4} takes the form
\begin{align*}
& ((\hat{U}_T \varphi)^* \otimes_{S_T} \hat{U}_T \psi)(x',p') =
    \frac{1}{(2\pi\hbar)^{3/2}} |a|^{-1} \iiint \vd{y''} \vd{x} \vd{y}
    \varphi^*(x) \psi(y) \\
& \qquad \times \exp \left( \frac{i}{\hbar} \left( a^{-1}(x - y)x'
    - a^{-1}\left(\frac{1}{2} x + \frac{1}{2}y\right)y''
    + \phi\left(x' - \frac{1}{2} y''\right)
    - \phi\left(x' + \frac{1}{2} y''\right) \right. \right. \\
& \qquad {} - \phi\left(x' - \frac{1}{2} y''\right)
    + \phi\left(x' + \frac{1}{2}y''\right)
    - \phi'(x')y'' - p'y'' \bigr) \biggr) \\
& \quad = \frac{1}{(2\pi\hbar)^{3/2}} |a|^{-1} \iiint \vd{y''} \vd{x} \vd{y}
    \varphi^*(x) \psi(y) \\
& \qquad \times \exp \left( \frac{i}{\hbar} \left( -a^{-1}(y - x)x'
    - \left(a^{-1}\left(\frac{1}{2} x + \frac{1}{2} y\right) + \phi'(x')
    + p'\right)y''\right)\right).
\end{align*}
Introducing new variables $y' = y - x$ and $z' = a^{-1}(\frac{1}{2} x +
\frac{1}{2} y) + \phi'(x') + p'$ the above equation can be written in the form
\begin{align*}
& ((\hat{U}_T \varphi)^* \otimes_{S_T} \hat{U}_T \psi)(x',p') =
    \frac{1}{(2\pi\hbar)^{3/2}} \iiint \vd{y''} \vd{y'} \vd{z'}
    \varphi^*\left(az' - a\phi'(x') - ap' - \frac{1}{2}y'\right) \\
& \qquad \times \psi\left(az' - a\phi'(x') - ap' + \frac{1}{2} y'\right)
    e^{-\frac{i}{\hbar}ax'y'} e^{-\frac{i}{\hbar}z'y''} \\
& \quad = \frac{1}{\sqrt{2\pi\hbar}} \int \vd{y'}
    \varphi^*\left(-ap' - a\phi'(x') - \frac{1}{2} y'\right)
    \psi\left(-ap' - a\phi'(x') + \frac{1}{2} y'\right)
    e^{-\frac{i}{\hbar} a^{-1}x'y'}.
\end{align*}

\subsection{Nonlinear point transformations}
\label{subsec:5.3}
Let us consider in the following example the point transformations of phase
space coordinates
\begin{equation*}
T(x',p') = (\phi(x'),(\phi'(x'))^{-1}p')
\end{equation*}
generated by functions $F(x',p) = -p\phi(x')$ ($\phi$ being an
arbitrary smooth bijective function). These transformations from
construction are classical canonical transformations. For a given
function $f \in C^\infty(\mathbb{R}^2)$ the derivatives of the
function $f$ transform as follows
\begin{align*}
\frac{\partial f}{\partial x} \circ T & = (\phi'(x'))^{-2} \phi''(x') p'
    \frac{\partial}{\partial p'}(f \circ T)
    + (\phi'(x'))^{-1} \frac{\partial}{\partial x'}(f \circ T), \\
\frac{\partial f}{\partial p} \circ T & = \phi'(x')
    \frac{\partial}{\partial p'}(f \circ T).
\end{align*}
Hence the Moyal \nbr{\star}product transforms to the following product
\begin{equation*}
f \star'_T g = f \exp \left( \frac{1}{2} i\hbar \overleftarrow{D_{x'}}
    \overrightarrow{D_{p'}} - \frac{1}{2} i\hbar \overleftarrow{D_{p'}}
    \overrightarrow{D_{x'}} \right) g,
\end{equation*}
where
\begin{align*}
D_{x'} & = (\phi'(x'))^{-1} \partial_{x'}
    + (\phi'(x'))^{-2} \phi''(x') p' \partial_{p'}, \\
D_{p'} & = \phi'(x') \partial_{p'}.
\end{align*}

One can calculate that
\begin{align*}
\hat{q}'_T & = x' \star'_T {} = x' + \frac{1}{2}i\hbar\partial_{p'} \\
& \quad {} + \sum_{n=1}^\infty \frac{1}{(n+1)!}
    \left( \frac{i\hbar}{2} \right)^{n+1}
    (\phi'(x'))^{-n} \sum_{(k_1,\dotsc,k_n) \in G_n} a_{k_1,\dotsc,k_n}^{(n)}
    \phi^{(k_1)}(x') \dotsm \phi^{(k_n)}(x') \partial_{x'}\partial_{p'}^{n+1},\\
\hat{p}'_T & = p' \star'_T {} = p' - \frac{1}{2}i\hbar\partial_{x'} \\
& \quad {} + \sum_{n=1}^\infty \frac{1}{(n+1)!}
    \left( \frac{i\hbar}{2} \right)^{n+1}
    (\phi'(x'))^{-n-1} \left( \sum_{(k_1,\dotsc,k_{n+1}) \in G_{n+1}}
    b_{k_1,\dotsc,k_{n+1}}^{(n+1)} \phi^{(k_1)}(x') \dotsm \phi^{(k_{n+1})}(x')
    p'\partial_{p'}^{n+1} \right. \\
& \quad {} + \sum_{(k_1,\dotsc,k_n) \in G_n} c_{k_1,\dotsc,k_n}^{(n)}
    \phi^{(k_1)}(x') \dotsm \phi^{(k_n)}(x') \phi'(x') \partial_{x'}
    \partial_{p'}^n \\
& \quad {} \left. + \sum_{(k_1,\dotsc,k_n) \in G_n} n c_{k_1,\dotsc,k_n}^{(n)}
    \phi^{(k_1)}(x') \dotsm \phi^{(k_n)}(x') \phi''(x') \partial_{p'}^n \right),
\end{align*}
where $G_n = \{(k_1,\dotsc,k_n) \in \mathbb{N}^n \colon 1 \le k_1 \le \dotsb
\le k_n, \sum_{i=1}^n k_i = 2n\}$ and $a_{k_1,\dotsc,k_n}^{(n)}$,
$b_{k_1,\dotsc,k_n}^{(n)}$, $c_{k_1,\dotsc,k_n}^{(n)}$ are some integer
constants. To the third order in $\hbar$ the operators $\hat{q}'_T$ and
$\hat{p}'_T$ take the form
\begin{align*}
\hat{q}'_T & = x' + \frac{1}{2}i\hbar\partial_{p'}
    + \frac{1}{2!} \left( \frac{i\hbar}{2} \right)^2 (\phi'(x'))^{-1}
    \Bigl( -\phi''(x') \Bigr) \partial_{p'}^2 \\
& \quad {} + \frac{1}{3!} \left( \frac{i\hbar}{2} \right)^3 (\phi'(x'))^{-2}
    \Bigl( 3(\phi''(x'))^2 - \phi'(x') \phi'''(x') \Bigr) \partial_{p'}^3
    + o(\hbar^4), \\
\hat{p}'_T & = p' - \frac{1}{2}i\hbar\partial_{x'}
    + \frac{1}{2!} \left( \frac{i\hbar}{2} \right)^2 (\phi'(x'))^{-2} \biggl(
    \Bigl( -3(\phi''(x'))^2 + \phi'(x') \phi'''(x') \Bigr) p'\partial_{p'}^2 \\
& \quad {} + \Bigl( -2\phi''(x') \Bigr) \phi'(x') \partial_{x'} \partial_{p'}
+ \Bigl( -2\phi''(x') \Bigr) \phi''(x') \partial_{p'} \biggr) \\
& \quad {} + \frac{1}{3!} \left( \frac{i\hbar}{2} \right)^3 (\phi'(x'))^{-3}
    \biggl( \Bigl( 6(\phi''(x'))^3 - 7\phi'(x')\phi''(x')\phi'''(x')
    + (\phi'(x'))^2 \phi''''(x') \Bigr) p'\partial_{p'}^3 \\
& \quad {} + \Bigl( 3(\phi''(x'))^2 - 3\phi'(x')\phi'''(x') \Bigr) \phi'(x')
    \partial_{x'} \partial_{p'}^2
    + 2\Bigl( 3(\phi''(x'))^2 - 3\phi'(x')\phi'''(x') \Bigr) \phi''(x')
    \partial_{p'}^2 \biggr) + o(\hbar^4).
\end{align*}
Moreover, it can be verified that
\begin{equation*}
[\hat{q}'_T, \hat{p}'_T] = i\hbar,
\end{equation*}
which shows that $T$ is a quantum canonical transformation.

Using \thmref{thm:3.1} it can be shown that the isomorphism $S_T$ is of the
form
\begin{align*}
S_T & = \exp \left[ \sum_{n=1}^\infty \frac{1}{(2n+1)!}
\left( \frac{\hbar}{2} \right)^{2n} (\phi'(x'))^{-2n} \left(
\sum_{(k_1,\dotsc,k_{2n}) \in G_{2n}} A_{k_1,\dotsc,k_{2n}}^{(2n)}
    \phi^{(k_1)}(x') \dotsm \phi^{(k_{2n})}(x') p'\partial_{p'}^{2n+1}
    \right. \right. \\
& \quad {} + \sum_{(k_1,\dotsc,k_{2n-1}) \in G_{2n-1}}
    B_{k_1,\dotsc,k_{2n-1}}^{(2n-1)} \phi^{(k_1)}(x') \dotsm
    \phi^{(k_{2n-1})}(x') \phi'(x') \partial_{x'} \partial_{p'}^{2n} \\
& \quad {} \left. \left. + \sum_{(k_1,\dotsc,k_{2n-1}) \in G_{2n-1}}
    C_{k_1,\dotsc,k_{2n-1}}^{(2n-1)} \phi^{(k_1)}(x') \dotsm
    \phi^{(k_{2n-1})}(x') \phi''(x') \partial_{p'}^{2n} \right) \right],
\end{align*}
where $A_{k_1,\dotsc,k_n}^{(n)}$, $B_{k_1,\dotsc,k_n}^{(n)}$,
$C_{k_1,\dotsc,k_n}^{(n)}$ are some integer constants. To the second order in
$\hbar$ the isomorphism $S_T$ reads
\begin{align*}
S_T & = 1 + \frac{1}{3!} \left( \frac{\hbar}{2} \right)^2 (\phi'(x'))^{-2}
    \biggl(\Bigl(3(\phi''(x'))^2 - \phi'(x')\phi'''(x') \Bigr) p'\partial_{p'}^3
    + \Bigl( 3\phi''(x') \Bigr) \phi'(x') \partial_{x'} \partial_{p'}^2 \\
& \quad {} + \Bigl( 3\phi''(x') \Bigr) \phi''(x') \partial_{p'}^2 \biggr)
    + o(\hbar^4).
\end{align*}

There is no simple expression for the constants $A_{k_1,\dotsc,k_n}^{(n)}$,
$B_{k_1,\dotsc,k_n}^{(n)}$, $C_{k_1,\dotsc,k_n}^{(n)}$ although, they can be
calculated for particular cases of point transformations. One of such cases is
for $\phi \colon \mathbb{R} \setminus \{0\} \to \mathbb{R} \setminus \{0\}$,
$\phi(x') = \sgn(x') \sqrt{\abs{2x'}}$ (the example from Introduction).
The operators $\hat{q}'_T$ and $\hat{p}'_T$ take then the form
\begin{align*}
\hat{q}'_T & = x' + \frac{1}{2}i\hbar\partial_{p'}
    - \frac{1}{8} \hbar^2 \sgn(x') \abs{2x'}^{-1} \partial_{p'}^2, \\
\hat{p}'_T & = p' - \frac{1}{2}i\hbar\partial_{x'}
    + \sum_{n=1}^\infty \left( -\frac{i\hbar}{2} \right)^{n+1}
    \Bigl( (\sgn(x'))^n \abs{2x'}^{-n} \partial_{x'} \partial_{p'}^n
    - n(\sgn(x'))^{n+1} \abs{2x'}^{-n-1} \partial_{p'}^n \Bigr),
\end{align*}
and the isomorphism $S_T$ is expressed by the formula
\begin{equation*}
S_T = \exp \left( \sum_{n=1}^\infty (-1)^n \left( \frac{\hbar}{2} \right)^{2n}
    \Bigl( A_n \sgn(x') \abs{2x'}^{-2n+1} \partial_{x'} \partial_{p'}^{2n}
    - B_n \abs{2x'}^{-2n} \partial_{p'}^{2n} \Bigr) \right),
\end{equation*}
where $A_n$ and $B_n$ are rational constants given recursively by
\begin{align*}
A_n & = \frac{1}{2n} \left( 1 - \sum_{k=2}^n \frac{1}{k!} A_{2n-1}^{(k)}
    \right), \\
B_n & = \frac{1}{2n} \left( 2n - 1 - \sum_{k=2}^n \frac{1}{k!} B_{2n-1}^{(k)}
    \right),
\end{align*}
where
\begin{align*}
A_{2n-1}^{(k)} & = \sum_{m=1}^{n-1} 4(n-2m) A_{n-m} A_{2m-1}^{(k-1)}, \\
B_{2n-1}^{(k)} & = \sum_{m=1}^{n-1} \left( 4(n-m) B_{n-m} A_{2m-1}^{(k-1)}
    - 4m A_{n-m} B_{2m-1}^{(k-1)} \right)
\end{align*}
for $k = 2,3,\dotsc,n$ and $n = 2,3,\dotsc$, and
\begin{gather*}
A_1^{(k)} = B_1^{(k)} = 0, \quad k = 2,3,\dotsc, \\
A_{2n-1}^{(1)} = 2n A_n, \quad B_{2n-1}^{(1)} = 2n B_n, \quad n = 1,2,\dotsc.
\end{gather*}
The values of couple first constants $A_n$ and $B_n$ are
\begin{align*}
A_1 & = \frac{1}{2}, & A_2 & = \frac{1}{4}, &
A_3 & = \frac{1}{4}, & A_4 & = \frac{7}{24}, \\
B_1 & = \frac{1}{2}, & B_2 & = \frac{3}{4}, &
B_3 & = \frac{5}{4}, & B_4 & = \frac{49}{24}.
\end{align*}

\subsection{A nonlinear transformation in four-dimensions}
\label{subsec:5.4}
In the following example we will demonstrate the presented theory of coordinate
transformations on the nonlinear transformation $T(x',y',p'_1,p'_2) =
(x,y,p_1,p_2)$ in four-dimensional phase space $\mathbb{R}^4$, where
\begin{align*}
x & = x' - \frac{1}{m_1} t p'_1 - \frac{k}{2m_1} t^2 p'^2_2, \\
y & = y' - \frac{1}{m_2} t p'_2 - 2kt x' p'_2
    + \frac{k}{m_1} t^2 p'_1 p'_2 + \frac{k^2}{3m_1} t^3 p'^3_2, \\
p_1 & = p'_1 + kt p'^2_2, \\
p_2 & = p'_2,
\end{align*}
and $m_1$, $m_2$, $k$, $t$ are some real constants. This transformation is a
classical canonical transformation generated by a function
\begin{equation*}
F(x,y,p'_1,p'_2) = xp'_1 + yp'_2 + ktxp'^2_2 + \frac{1}{2m_1}t p'^2_1
    + \frac{1}{2m_2}t p'^2_2 + \frac{k}{2m_1}t^2 p'_1 p'^2_2
    + \frac{k^2}{6m_1}t^3 p'^4_2.
\end{equation*}
Note, that this transformation is a four-dimensional example of the
transformation from \eqref{eq:3.14a}. It transforms the Moyal \nbr{\star}product
to the following product
\begin{equation*}
f \star'_T g = f \exp \left( \frac{1}{2} i\hbar \overleftarrow{D_{x'}}
    \overrightarrow{D_{p'_1}} + \frac{1}{2} i\hbar \overleftarrow{D_{y'}}
    \overrightarrow{D_{p'_2}} - \frac{1}{2} i\hbar \overleftarrow{D_{p'_1}}
    \overrightarrow{D_{x'}} - \frac{1}{2} i\hbar \overleftarrow{D_{p'_2}}
    \overrightarrow{D_{y'}} \right) g,
\end{equation*}
where
\begin{align*}
D_{x'} & = \partial_{x'} + 2kt p'_2 \partial_{y'}, \\
D_{y'} & = \partial_{y'}, \\
D_{p'_1} & = \partial_{p'_1} + \frac{1}{m_1} t \partial_{x'}
    + \frac{k}{m_1} t^2 p'_2 \partial_{y'}, \\
D_{p'_2} & = \partial_{p'_2} - 2kt p'_2 \partial_{p'_1}
    - \frac{k}{m_1}t^2 p'_2 \partial_{x'} + \left( \frac{1}{m_2}t + 2kt x'
    - \frac{k}{m_1}t^2 p'_1 - \frac{k^2}{m_1}t^3 p'^2_2 \right) \partial_{y'}.
\end{align*}

One can calculate that
\begin{align*}
(\hat{q}'_T)^1 & = x' + \frac{1}{2} i\hbar \partial_{p'_1}
    - \frac{1}{2} \left( \frac{i\hbar}{2} \right)^2 \frac{k}{m_1} t^2
    \partial_{y'}^2, \\
(\hat{q}'_T)^2 & = y' + \frac{1}{2} i\hbar \partial_{p'_2}
    - \left(\frac{i\hbar}{2}\right)^2 \frac{k^2}{m_1} t^3 p'_2 \partial_{y'}^2
    - \left( \frac{i\hbar}{2} \right)^2 \frac{k}{m_1} t^2
    \partial_{x'} \partial_{y'} \\
& \quad {} - \left(\frac{i\hbar}{2}\right)^2 2kt \partial_{y'} \partial_{p'_1}
    + \frac{1}{3} \left( \frac{i\hbar}{2} \right)^3 \frac{k^2}{m_1} t^3
    \partial_{y'}^3, \\
(\hat{p}'_T)_1 & = p'_1 - \frac{1}{2} i\hbar \partial_{x'}
    - \left( \frac{i\hbar}{2} \right)^2 kt \partial_{y'}^2, \\
(\hat{p}'_T)_2 & = p'_2 - \frac{1}{2} i\hbar \partial_{y'},
\end{align*}
and that
\begin{equation*}
[(\hat{q}'_T)^i, (\hat{p}'_T)_j] = i\hbar\delta^i_j.
\end{equation*}
The above equation shows that the transformation $T$ is a quantum canonical
transformation. Now, using \thmref{thm:3.1} it can be shown that the isomorphism
$S_T$ is of the form
\begin{equation*}
S_T = \exp \left(
    \frac{1}{8} \hbar^2 \frac{k}{m_1} t^2 \partial_{x'} \partial_{y'}^2
    + \frac{1}{4} \hbar^2 kt \partial_{p'_1} \partial_{y'}^2
    + \frac{1}{12} \hbar^2 \frac{k^2}{m_1} t^3 p'_2 \partial_{y'}^3 \right).
\end{equation*}
It can be also proved that $S_T$ is an isomorphism (unitary operator) of the
Hilbert space $L^2(\mathbb{R}^4)$ onto itself.

It can be checked using formula \eqref{eq:4.5} that the unitary operator
$\hat{U}_T$ associated to the transformation $T$ is of the form
\begin{align*}
(\hat{U}_T \varphi)(x',y') & = \frac{1}{(2\pi\hbar)^2} \iint \varphi(x,y)
    \sqrt{\Abs{\det\left[ \frac{\partial^2 F}{\partial_{x^i} \partial_{p'_j}}
    (x,y,p'_1,p'_2) \right]}} e^{\frac{i}{\hbar}F(x,y,p'_1,p'_2)}
    e^{-\frac{i}{\hbar}(x'p'_1 + y'p'_2)} \ud{x}\ud{y}\ud{p'_1}\ud{p'_2} \\
& = \frac{1}{(2\pi\hbar)^2} \iint \varphi(x,y)
    e^{\frac{i}{\hbar}((x - x')p'_1 + (y - y')p'_2 + ktxp'^2_2
    + \frac{1}{2m_1}t p'^2_1 + \frac{1}{2m_2}t p'^2_2)} \\
& \quad \times e^{\frac{i}{\hbar}(\frac{k}{2m_1}t^2 p'_1 p'^2_2
    + \frac{k^2}{6m_1}t^3 p'^4_2)} \ud{x}\ud{y}\ud{p'_1}\ud{p'_2}.
\end{align*}

The transformation $T$ is an example of a quantum (and classical) trajectory
on phase space of a two particle system which time evolution is described by the
Hamiltonian \cite{Dias:2004}
\begin{equation*}
H(x,y,p_1,p_2) = \frac{p_1^2}{2m_1} + \frac{p_2^2}{2m_2} + k x p_2^2.
\end{equation*}
The constants $m_1$, $m_2$ are then interpreted as masses of particles, $k$ is
a coupling constant, and $t$ is an evolution parameter (time).

\section{Final remarks}
\label{sec:6}
In the paper we presented consistent theory of canonical transformations of
phase space coordinates in quantum mechanics. The starting point of this
approach was considering phase space quantum mechanics and introducing there
transformations of coordinates in a similar manner as in classical mechanics.
Then we passed to ordinary description of quantum mechanics. Presented theory
allows to solve the problem concerning the consistency of quantization mentioned
in Introduction. Actually, one has to change the ordering of position and
momentum operators in transformed observables from the symmetric to the
\nbr{S_T}ordering.

Returning to the example of the harmonic oscillator from Introduction we can now
associate to the transformed Hamiltonian $H'$ from \eqref{eq:1.2} an appropriate
operator. First, we need to calculate $S_T^{-1}H'$, which using the form of the
isomorphism $S_T$ calculated in the example from \subsecref{subsec:5.3} gives
\begin{equation*}
(S_T^{-1}H')(x',p') = \abs{x'}p'^2 + \omega^2 \abs{x'}
    + \frac{1}{16} \hbar^2 \abs{x'}^{-1}.
\end{equation*}
The \nbr{S_T}ordered operator associated to $H'$ is then equal
\begin{equation*}
H'_{S_T}(\hat{q}',\hat{p}') = (S_T^{-1}H')_M(\hat{q}',\hat{p}')
= \frac{1}{2} \abs{\hat{q}'} \hat{p}'^2 + \frac{1}{2} \hat{p}'^2 \abs{\hat{q}'}
    + \omega^2 \abs{\hat{q}'} + \frac{1}{16} \hbar^2 \abs{\hat{q}'}^{-1}.
\end{equation*}
This operator is indeed unitarily equivalent with its untransformed counterpart.
The unitary operator $\hat{U}_T$ intertwining $H'_{S_T}(\hat{q}',\hat{p}')$ and
$H_M(\hat{q},\hat{p})$ is of the form \eqref{eq:4.4}. To check that this is
correct let us calculate the action of $H'_{S_T}(\hat{q}',\hat{p}')$ on a ground
state of the harmonic oscillator, $\varphi_0(x) = \left(\frac{\omega}{\pi\hbar}
\right)^{1/4} \exp\left(-\frac{\omega x^2}{2\hbar}\right)$, transformed to the
new coordinate system by $\hat{U}_T$. One easily calculates that
\begin{equation*}
(\hat{U}_T \varphi_0)(x') = \left(\frac{\omega}{\pi\hbar}\right)^{1/4}
    \frac{1}{\sqrt{\abs{2x'}}} \exp\left(-\frac{\omega \abs{x'}}{\hbar}\right)
\end{equation*}
and that
\begin{equation*}
H'_{S_T}(\hat{q}',\hat{p}') \hat{U}_T \varphi_0 = \frac{1}{2}\hbar\omega
    \hat{U}_T \varphi_0,
\end{equation*}
which shows that $\hat{U}_T \varphi_0$ is an eigen-state of the transformed
Hamiltonian of the oscillator, corresponding to an energy
$\frac{1}{2}\hbar\omega$, as it should be.

The general theory of quantum canonical transformations developed in our paper
suggests the proper approach for tackling the problem of general transformations
of coordinates in quantum mechanics. Especially it provides a method of an
investigation of non-canonical transformations, which are hard to introduce
directly in ordinary quantum mechanics. In fact, the proper way of developing
non-canonical transformation theory should look as follows. First we consider a
transformation $T$ of coordinates in phase space quantum mechanics and find the
form of the isomorphism $S_T$ related to $T$. Then with the help of $S_T$ we
can pass to ordinary quantum mechanics. The problem which still needs to be
solved is an introduction of a proper ordering of operators related to
non-canonical coordinates $(\xi,\zeta)$, i.e., the definition of the operator
function has to be extended so that the following formula would hold
\begin{equation*}
A \star'_T {} = A_{S_T}(\xi \star'_T {}, \zeta \star'_T {}).
\end{equation*}
Only then the passage to ordinary quantum mechanics can be found.

In this paper we considered transformations of coordinates which were defined
on almost the whole phase space. Problems arise when a transformation will be
defined on some arbitrary open subset of phase space. We believe that in such
case the space of states in ordinary quantum mechanics will be a Hilbert space
of square integrable functions with respect to some measure dependent on the
transformation. Moreover, the momentum operator would not be in the form
$-i\hbar\partial_{x'}$.

From the application point of view the problem which has to be solved is
a systematic construction of isomorphisms $S_T$ and operators
$\hat{U}_T$ for a general transformation $T$.

In conclusion, the possibility of introducing different coordinate systems in
quantum mechanics is promising, however, there are still many problems which
need to be solved. Furthermore, the geometric approach to coordinate
transformations in phase space quantum mechanics, especially for quantum
trajectories, gives a potential possibility of developing quantum theories
similar to those found in classical theory, e.g., to develop a geometric theory
of quantum integrable systems: quantum bi-Hamiltonian systems.


\end{document}